\newcommand\myeq{\stackrel{\mathclap{\normalfont\mbox{def}}}{=}}
\newtheorem{theorem}{Theorem}
\author{
  Changye Wu\thanks{CEREMADE, Universit\'e Paris-Dauphine PSL, France. wu@ceremade.dauphine.fr}
  \and
  Christian Robert\thanks{Universit\'e Paris Dauphine PSL, CREST, France and University of Warwick, UK.  xian@ceremade.dauphine.fr}
}
\title{Generalized Bouncy Particle Sampler}
\begin{document}
\maketitle
\begin{abstract}
\noindent As a special example of piecewise deterministic Markov process, bouncy particle sampler is a rejection-free, irreversible Markov chain Monte Carlo algorithm and can draw samples from target distribution efficiently. We generalize bouncy particle sampler in terms of its transition dynamics. In BPS, the transition dynamic at event time is deterministic, but in GBPS, it is random. With the help of this randomness, GBPS can overcome the reducibility problem in BPS without refreshment.
\end{abstract}
\section{Introduction}
As a powerful sampling technique, Markov chain Monte Carlo (MCMC) method has been widely used in computational statistics and is now a standard tool in Bayesian inference, where posterior distribution is often intractable analytically, known up to a constant. However, almost all existing MCMC algorithms, such as Metropolis-Hastings algorithm (MH), Hamiltonian Monte Carlo (HMC) and Metropolis adjusted Langevin algorithm (MALA), are based on detailed balance condition, dating back to (\cite{m1953}, \cite{hastings1970monte}). Recently, a novel type of MCMC method --- piecewise deterministic Markov process (PDMP) --- appeared in computational statistics, method that is generally irreversible, meaning a violation of the detailed balance condition. The theory of PDMP is developed by (\cite{davis1984piecewise}, \cite{davis1993markov}), while its quite remarkable applications on computational statistics are implemented by (\cite{bouchard2017bouncy}, \cite{bierkens2015piecewise}, \cite{bierkens2016zig}). \\
\\
Compared with traditional MCMC algorithms, PDMP is rejection-free, which means that there is no waste of proposal samples. Based on detailed balance condition, traditional MCMC algorithms are reversible. However, some theoretic work and  numerical experiments (\cite{hwang1993accelerating}, \cite{sun2010improving}, \cite{chen2013accelerating}, \cite{bierkens2016non}) have shown that irreversible Markov chain can outperform reversible MCMC with respect to mixing rate and asymptotic variance. PDMP is a typically irreversible Markov chain, which is of interest to investigate. Bouncy particle sampler (BPS) generates a special piecewise deterministic Markov chain, originating in \cite{peters2012rejection} and being explored by  \cite{bouchard2017bouncy}. Zig-zag process sampler \cite{bierkens2016zig} is another PDMP example and \cite{fearnhead2016piecewise} unifies the BPS and zig-zag process sampler in the framework of PDMP. Besides, MCMC algorithms are difficult to scale, since computing each MH acceptance ratio needs to sweep over the whole data set. However, according to their special structures, BPS and zig-zag process sampler are easy to scale for big data. Except PDMP style MCMC algorithms, almost all other existing scalable MCMC algorithms (such as, \cite{scott2016bayes}, \cite{neiswanger2013asymptotically}, \cite{wang2013parallelizing}, \cite{minsker2014robust}, \cite{quiroz2015speeding}, \cite{bardenetmarkov}), are approximate, not exact, which do not admit the target distribution as their invariant distribution. \\
\\
In this article, we generalize bouncy particle sampler -- generalized bouncy particle sampler (GBPS) -- which can be treated as an extension of BPS and zig-zag process sampler. In BPS, the transition dynamic at event time is deterministic. In opposition, the transition dynamic in GBPS is random with respect to some distribution. In zig-zag process sampler, we decompose the velocity with respect to some fixed coordinate system, while in GBPS, we use a moving coordinate system to decompose the velocity. Besides, the main gain of GBPS compared with BPS is that there is no parameter to tune. In fact, for BPS, in order to overcome the reducibility problem, we need to add a refreshment Poisson process to refresh the velocity occasionally, which needs to be tuned to balance the efficiency and accuracy. But for GBPS, the randomness of refreshment is incorporated into the transition dynamics and there is no parameter to tune.
\\
\\
This paper is organized as follows: we introduce piecewise deterministic Markov process and bouncy particle sampler in section 2, followed by the introduction of  generalized bouncy particle sampler (GBPS) and its implementation issues in section 3. In section 4, we present three numerical experiments of GBPS. At last, we discuss some questions and conclude in section 5.
\section{Piecewise Deterministic Markov Process}
\noindent In this section, suppose $\pi(\bold{x})$ be the target distribution, where $\bold{x}\in\mathbb{R}^d$. We introduce an auxiliary variable, $\bold{v}\in\mathbb{R}^d$, called velocity, which is restricted to be of unit length, $\Vert\bold{v}\Vert_2=1$. Denote $\bold{z} = (\bold{x}, \bold{v})\in\mathbb{R}^d\times S_{d-1}$. In order to obtain the target, we just need to force $\pi(\bold{x})$ to be the marginal distribution of $\pi(\bold{x}, \bold{v})$ with respect to $\bold{x}$. Let $\{\bold{z}_t\}$ denote a piecewise deterministic Markov chain of $\bold{z}$ on the augmented space $(\bold{x},\bold{v})\in\mathbb{R}^d\times S_{d-1}$. The dynamics of PDMP consist of three types of dynamics, namely, deterministic dynamic, event occurrence and transition dynamic. Specifically, 
\begin{enumerate}
  \item \textbf{The deterministic dynamic}: between two event times, the Markov process evolves deterministically, according to some partial differential equation:
           \begin{equation*}
           \frac{dz^{(i)}_t}{dt} = \Psi^{(i)}(\bold{z}_t),\quad i = 1, \cdots, 2d
           \end{equation*}
  \item \textbf{The event occurrence}: the event occurs at the rate: $\lambda(\bold{z}_t)$.
  \item \textbf{The transition dynamic}: At the event time, $\tau$, we denote $\bold{z}_{\tau-}$ the state prior to $\tau$, then $\bold{z}_{\tau}\sim Q(\cdot|\bold{z}_{\tau-})$
\end{enumerate}
Following from (\cite{davis1993markov}, Theorem 26.14), this Markov process's extension generator is 
\begin{equation*}
\mathcal{A} f(\bold{z}) = \nabla f(\bold{z})\cdot\Psi(\bold{z}) + \lambda(\bold{z})\int_{\mathbb{R}^{d}\times S_{d-1}}\left(f(\bold{z}')-f(\bold{z})\right)Q(d\bold{z}';\bold{z})
\end{equation*}
\subsection{Bouncy Particle Sampler}
Bouncy particle sampler (BPS) is a specific piecewise deterministic Markov process, which admits $\pi(\bold{x})d\bold{x}\otimes d\bold{v}$ over the state space $\mathbb{R}^d\times S_{d-1}$ as its invariant distribution, by specifying the event rate $\lambda(\bold{z})$ and the transition dynamic $Q(d\bold{z}'; \bold{z})$.
\begin{enumerate}
  \item \textbf{The deterministic dynamic}: 
           \begin{equation*}
           \frac{dx^{(i)}_t}{dt} = v^{(i)}_t, \quad \frac{dv^{(i)}_t}{dt} = 0, \quad i = 1, \cdots, d
           \end{equation*}
  \item \textbf{The event occurrence}: $\lambda(\bold{z}_t) = \max\{0, -\bold{v}_t\cdot\nabla\log\pi(\bold{x}_t)\}$.
  \item \textbf{The transition dynamic}: $Q(\cdot|\bold{x}, \bold{v}) = \delta_{(\bold{x}, P_{\bold{x}}\bold{v})}(\cdot)$, where 
           \begin{equation*}
           P_{\bold{x}}\bold{v} = \bold{v} -2 \frac{\langle\bold{v}, \nabla\log\pi(\bold{x})\rangle}{\langle\nabla\log\pi(\bold{x}), \nabla\log\pi(\bold{x})\rangle}\nabla\log\pi(\bold{x})
           \end{equation*}
\end{enumerate}
\cite{bouchard2017bouncy} has shown that BPS admits  $\pi(\bold{x})d\bold{x}\otimes d\bold{v}$ as its invariant distribution. However, the authors also find that pure BPS (specified above) meets with a reducibility problem and add a reference Poisson process into BPS to overcome it. The workflow of BPS with refreshment is shown in Algorithm \ref{alg:BPS}.
\begin{algorithm}[h]
   \caption{Bouncy Particle Sampler}
   \label{alg:BPS}
\begin{algorithmic}
\STATE {\bfseries Initialize:} $\bold{x}_0, \bold{v}_0, T_0 = 0$.
\FOR{ $i = 1, 2, 3, \cdots$}
    \STATE Generate $\tau \sim PP(\lambda(\bold{x}_t, \bold{v}_t))$
    \STATE Generate $\tau^{\text{ref}} \sim PP(\lambda^{\text{ref}})$
    \IF {$\tau \leq \tau^{\text{ref}}$}
        \STATE $T_i \leftarrow T_{i-1} + \tau$
        \STATE $\bold{x}_i \leftarrow \bold{x}_{i-1} + \tau\bold{v}_{i-1}$
        \STATE $\bold{v}_i \leftarrow \bold{v}_{i-1} - 2 \frac{\langle\bold{v}_{i-1}, \nabla\log\pi(\bold{x}_i)\rangle}{\langle\nabla\log\pi(\bold{x}_{i}), \nabla\log\pi(\bold{x}_i)\rangle}\nabla\log\pi(\bold{x}_i)$
    \ELSE
        \STATE $T_i \leftarrow T_{i-1} + \tau^{\text{ref}}$
        \STATE $\bold{x}_i \leftarrow \bold{x}_{i-1} + \tau^{\text{ref}}\bold{v}_{i-1}$
        \STATE $\bold{v}_i \sim \mathcal{U}(S_{d-1})$
    \ENDIF
    \ENDFOR
   \end{algorithmic}
\end{algorithm}
\section{Generalized Bouncy Particle Sampler}
\noindent In BPS, at event time, the velocity changes deterministically. However, we find that the velocity can be changed into other directions, according to some distribution, at event time, which incorporates the randomness of the reference Poisson process in BPS to overcome the reducibility. In  this section, we generalize the BPS. Specifically, prior to event time, we decompose the velocity according to the gradient of $\log{\pi(\bold{x})}$, flip the parallel subvector and resample the orthogonal subvector with respect to some distribution. The details are as follows: 
\begin{enumerate}
  \item \textbf{The deterministic dynamic}: 
           \begin{equation*}
           \frac{dx^{(i)}_t}{dt} = v^{(i)}_t, \quad \frac{dv^{(i)}_t}{dt} = 0, \quad i = 1, \cdots, d
           \end{equation*}
  \item \textbf{The event occurrence}: $\lambda(\bold{z}_t) =\max\{0,-\langle\bold{v}_t,\nabla\log\pi(\bold{x}_t)\rangle\}$.
  \item \textbf{The transition dynamic}: $Q(d\bold{x}',d\bold{v}'|\bold{x}, \bold{v}) = \delta_{\{\bold{x}\}}(d\bold{x}')\delta_{\{-\bold{v}_1\}}(d\bold{v}'_1)\mathcal{N}_{\bold{v}_1^{\perp}}(d\bold{v}'_2)$, where
           \begin{equation*}
           \bold{v}_1 = \frac{\langle \bold{v}, \nabla\log\pi(\bold{x})\rangle}{\langle\nabla\log\pi(\bold{x}), \nabla\log\pi(\bold{x})\rangle}\nabla\log\pi(\bold{x}), \quad \bold{v}_2 = \bold{v} - \bold{v}_1
           \end{equation*}
           \begin{equation*}
           \bold{v}'_1 = \frac{\langle \bold{v}', \nabla\log\pi(\bold{x})\rangle}{\langle\nabla\log\pi(\bold{x}), \nabla\log\pi(\bold{x})\rangle}\nabla\log\pi(\bold{x}),\quad \bold{v}'_2 = \bold{v}' - \bold{v}'_1
           \end{equation*}
           \begin{equation*}
           \bold{v}_1^{\perp} = \left\{\bold{u}\in\mathbb{R}^d : \langle\bold{u},\bold{v}_1\rangle = 0\right\}
           \end{equation*}
           $\mathcal{N}_{\bold{v}_1^{\perp}}$ is the $(d-1)-$dimensional standard normal distribution over the space $\bold{v}_1^{\perp}$.
\end{enumerate}
We summarize the GBPS in Algorithm \ref{alg:GBPS}.
\begin{algorithm}[h]
   \caption{Generalized Bouncy Particle Sampler}
   \label{alg:GBPS}
\begin{algorithmic}
\STATE {\bfseries Initialize:} $\bold{x}_0, \bold{v}_0, T_0 = 0$.
\FOR{ $i = 1, 2, 3, \cdots$}
    \STATE Generate $\tau \sim PP(\lambda(\bold{x}_t, \bold{v}_t))$
    \STATE $T_i \leftarrow T_{i-1} + \tau$
    \STATE $\bold{x}_i \leftarrow \bold{x}_{i-1} + \tau\bold{v}_{i-1}$
    \STATE $\bold{v}_i \leftarrow Q(d\bold{v}|\bold{x}_i, \bold{v}_{i-1})$
    \ENDFOR
   \end{algorithmic}
\end{algorithm}
\begin{theorem}
The above piecewise deterministic Markov chain admits $\pi(\bold{x})d\bold{x}\otimes \psi_d(\bold{v})d\bold{v}$ over $\mathbb{R}^{2d}$ as its invariant distribution, where $\psi_d(\bold{v})$ is the density function of $d-$dimensional standard normal distribution.
\end{theorem}
\begin{proof}
In order to prove $\pi(\bold{x})d\bold{x}\otimes\psi_d(\bold{v})d\bold{v}$ is the invariant distribution of generator $\mathcal{A}$ of the above Markov chain, we just need to prove the following equation is satisfied by appropriate functions $f$:
\begin{equation*}
\int_{\mathbb{R}^{d}}\int_{\mathbb{R}^d}\mathcal{A}f(\bold{x},\bold{v})\pi(\bold{x})\psi_d(\bold{v})d\bold{x}d\bold{v} = 0
\end{equation*}
where by Theorem 26.14, \cite{davis1993markov}
\begin{equation*}
\mathcal{A} f(\bold{z}) = \nabla f(\bold{z})\cdot\Psi(\bold{z}) + \lambda(\bold{x},\bold{v})\int_{\bold{v}'\in\mathbb{R}^d}f(\bold{x},\bold{v}')Q(d\bold{v}'|\bold{x},\bold{v}) - \lambda(\bold{x},\bold{v})f(\bold{x},\bold{v})
\end{equation*}
Since for bounded $f$,
\begin{equation*}
\begin{split}
&\int_{\mathbb{R}^{d}}\int_{\mathbb{R}^d}\langle\nabla f(\bold{z}), \Psi(\bold{z})\rangle\pi(\bold{x})\psi_d(\bold{v})d\bold{x}d\bold{v}\\
&= \int_{\mathbb{R}^{d}}\int_{\mathbb{R}^d}\langle\nabla_{\bold{x}} f(\bold{x},\bold{v}), \bold{v}\rangle\pi(\bold{x})\psi_d(\bold{v})d\bold{x}d\bold{v} \\
&= \int_{\mathbb{R}^{d}}\int_{\mathbb{R}^d}\langle-\nabla\log\pi(\bold{x}), \bold{v}\rangle f(\bold{x},\bold{v})\pi(\bold{x})\psi_d(\bold{v})d\bold{x}d\bold{v}
\end{split}
\end{equation*}
For each $\bold{x}$, decomposing the velocity spaces, $\mathbb{R}^d$, into the direct sum, $\bold{v}_1^{\parallel}\oplus\bold{v}_1^{\perp}$, such that $\bold{v}_1 = (\Vert\bold{v}_1\Vert_2,0,\cdots,0)$, then
\begin{equation*}
\lambda(\bold{x},(v_1,\cdots,v_d)) = \max\left\{0,-\langle(v_1,0,\cdots,0),\nabla\log\pi(\bold{x})\rangle\right\}
\end{equation*}
\begin{equation*}
\begin{split}
&\int_{\bold{v}\in\mathbb{R}^d}\int_{\bold{v}'\in\mathbb{R}^d}f(\bold{x},\bold{v}')\lambda(\bold{x},\bold{v})\pi(\bold{x})\psi_d(\bold{v})Q(d\bold{v}'|\bold{x},\bold{v})d\bold{v}\\
&=\int_{v_1\in\mathbb{R}}\int_{(v_2,\cdots,v_d)\in\mathbb{R}^{d-1}}\int_{v'_1\in\mathbb{R}}\int_{(v'_2,\cdots,v'_d)\in\mathbb{R}^{d-1}}f(\bold{x},(v_1',\cdots,v'_d))\\
&\times\lambda(\bold{x},(v_1,\cdots,v_d))\pi(\bold{x})\psi_1(v_1)\psi_{d-1}(v_2,\cdots,v_d)\psi_{d-1}(v'_2,\cdots,v'_d)\\
&\times\delta_{\{-v_1\}}(v'_1)dv_1dv_2\cdots dv_ddv'_1dv'_2\cdots dv'_d\\
&=\int_{v_1\in\mathbb{R}}\int_{(v_2,\cdots,v_d)\in\mathbb{R}^{d-1}}\int_{(v'_2,\cdots,v'_d)\in\mathbb{R}^{d-1}}f(\bold{x},(-v_1,v'_2,\cdots,v'_d))\\
&\times\max\left\{0,-\langle(v_1,0,\cdots,0),\nabla\log\pi(\bold{x})\rangle\right\}\pi(\bold{x})\psi_1(v_1)\psi_{d-1}(v_2,\cdots,v_d)\\
&\times\psi_{d-1}(v'_2,\cdots,v'_d) dv_1dv_2\cdots dv_ddv'_2\cdots dv'_d\\
&\quad\quad\text{(by the change of variable: $v_1 \rightarrow -v_1$ and $\psi_1(-v_1) = \psi_1(v_1)$)}\\
&=\int_{v_1\in\mathbb{R}}\int_{(v_2,\cdots,v_d)\in\mathbb{R}^{d-1}}\int_{(v'_2,\cdots,v'_d)\in\mathbb{R}^{d-1}}f(\bold{x},(v_1,v'_2,\cdots,v'_d))\\
&\times\max\left\{0,-\langle(-v_1,0,\cdots,0),\nabla\log\pi(\bold{x})\rangle\right\}\pi(\bold{x})\psi_1(-v_1)\psi_{d-1}(v_2,\cdots,v_d)\\
&\times \psi_{d-1}(v'_2,\cdots,v'_d)dv_1dv_2\cdots dv_ddv'_2\cdots dv'_d\\
&=\int_{v_1\in\mathbb{R}}\int_{(v_2,\cdots,v_d)\in\mathbb{R}^{d-1}}\int_{(v'_2,\cdots,v'_d)\in\mathbb{R}^{d-1}}f(\bold{x},(v_1,v'_2,\cdots,v'_d))\\
&\times\max\{0,\langle(v_1,0,\cdots,0),\nabla\log\pi(\bold{x})\rangle\}\pi(\bold{x})\psi_1(v_1)\psi_{d-1}(v_2,\cdots,v_d)\\
&\times \psi_{d-1}(v'_2,\cdots,v'_d)dv_1dv_2\cdots dv_ddv'_2\cdots dv'_d\\
&=\int_{v_1\in\mathbb{R}}\int_{(v'_2,\cdots,v'_d)\in\mathbb{R}^{d-1}}f(\bold{x},(v_1,v'_2,\cdots,v'_d))\psi_1(v_1)\psi_{d-1}(v'_2,\cdots,v'_d)\\
&\times\max\{0,\langle(v_1,0,\cdots,0),\nabla\log\pi(\bold{x})\rangle\}\pi(\bold{x})dv_1dv'_2\cdots dv'_d\\
&=\int_{\bold{v}\in\mathbb{R}^d}f(\bold{x},\bold{v})\lambda(\bold{x},-\bold{v})\pi(\bold{x})\psi_d(\bold{v})d\bold{v}
\end{split}
\end{equation*}
As a result, 
\begin{equation*}
\begin{split}
&\int_{\bold{x}\in\mathbb{R}^d}\int_{\bold{v}\in\mathbb{R}^d}\mathcal{A}f(\bold{x},\bold{v})\pi(\bold{x})\psi_d(\bold{v})d\bold{x}d\bold{v}\\
&=\int_{\bold{x}\in\mathbb{R}^d}\int_{\bold{v}\in\mathbb{R}^d}\left[ \langle-\nabla\log\pi(\bold{x}), \bold{v}\rangle \right]f(\bold{x},\bold{v})\pi(\bold{x})\psi_d(\bold{v})d\bold{x}d\bold{v}\\
&+\int_{\bold{x}\in\mathbb{R}^d}\int_{\bold{v}\in\mathbb{R}^d}\left[ \lambda(\bold{x}, -\bold{v})- \lambda(\bold{x}, \bold{v})  \right]f(\bold{x},\bold{v})\pi(\bold{x})\psi_d(\bold{v})d\bold{x}d\bold{v}\\
&\quad\quad\text{(since $\lambda(\bold{x},-\bold{v}) - \lambda(\bold{x},\bold{v}) = \langle\bold{v}, \nabla\log\pi(\bold{x})\rangle$)}\\
&=\int_{\mathbb{R}^d}\int_{\mathbb{R}^d}\left[ \langle-\nabla\log\pi(\bold{x}), \bold{v}\rangle + \langle\bold{v},\nabla\log\pi(\bold{x})\rangle  \right]f(\bold{x},\bold{v})\pi(\bold{x})d\bold{x}d\bold{v}\\
&=0
\end{split}
\end{equation*}
\end{proof}
In order to establish the ergodicity theorem of GBPS, we propose an assumption on the target distribution $\pi(\bold{x})$. \\
\\
\noindent \textbf{Assumption 1}: For any two points $\bold{x}_1, \bold{x}_2 \in \mathbb{R}^d$ and any velocity $\bold{v}\in\mathbb{R}^d, \Vert \bold{v} \Vert_2 = 1$, there exists $t > 0$, such that 
\begin{equation*}
\bold{x}_2\in S^{\perp}(\bold{x}_1 + t\bold{v}, \bold{v})
\end{equation*}
\begin{theorem}
Under Assumption 1, the Markov chain $\bold{z}'_t = (\bold{x}_t, \frac{\bold{v}_t}{\Vert \bold{v}_t\Vert})$ induced by GBPS admits $\pi(\bold{x})\times\mathcal{U}(S_{d-1})$ as its unique invariant distribution.
\end{theorem}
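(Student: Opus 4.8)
The plan is to deduce Theorem 2 from Theorem 1 by passing to the normalized velocity, rather than by analyzing the direction process directly. The starting point is that the full GBPS process $(\bold{x}_t,\bold{v}_t)$ on $\mathbb{R}^d\times\mathbb{R}^d$ is genuinely Markov and, by Theorem 1, admits $\mu:=\pi(\bold{x})d\bold{x}\otimes\psi_d(\bold{v})d\bold{v}$ as an invariant distribution. Since $\bold{z}'_t=(\bold{x}_t,\bold{v}_t/\Vert\bold{v}_t\Vert)$ is a measurable image of $(\bold{x}_t,\bold{v}_t)$ under the normalization map $T:(\bold{x},\bold{v})\mapsto(\bold{x},\bold{v}/\Vert\bold{v}\Vert)$, I would first identify the candidate invariant law as the pushforward $T_{\#}\mu$, and then obtain both stationarity and uniqueness for $\bold{z}'_t$ from the corresponding properties of the full process.

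The first substantive step is to compute $T_{\#}\mu$. Because $\mu$ factorizes as $\pi(\bold{x})d\bold{x}\otimes\psi_d(\bold{v})d\bold{v}$, the coordinates $\bold{x}$ and $\bold{v}$ are independent under $\mu$, so under $T_{\#}\mu$ the pair $(\bold{x},\bold{v}/\Vert\bold{v}\Vert)$ is independent with $\bold{x}\sim\pi$. It then remains to note that if $\bold{v}\sim\psi_d$ is standard Gaussian then $\bold{v}/\Vert\bold{v}\Vert$ is uniform on $S_{d-1}$: the law of $\bold{v}$ is invariant under every orthogonal transformation, hence so is the law of $\bold{v}/\Vert\bold{v}\Vert$ on the sphere, and the only rotation-invariant probability measure on $S_{d-1}$ is $\mathcal{U}(S_{d-1})$. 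Therefore $T_{\#}\mu=\pi(\bold{x})d\bold{x}\otimes\mathcal{U}(S_{d-1})$, which is exactly the claimed law; stationarity of $\bold{z}'_t$ is then immediate, since if $(\bold{x}_0,\bold{v}_0)\sim\mu$ then $(\bold{x}_t,\bold{v}_t)\sim\mu$ for all $t$ by Theorem 1, and applying $T$ gives $\bold{z}'_t\sim T_{\#}\mu$.

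The harder half is uniqueness, and this is where Assumption 1 enters. I would prove that the full process $(\bold{x}_t,\bold{v}_t)$ is irreducible, so that by the standard ergodic theory for piecewise deterministic Markov processes (the Meyn--Tweedie/Harris framework, together with non-explosion and a Feller or $T$-chain property) its invariant measure $\mu$ is unique and is the limiting law from every initial condition. The controllability argument uses the two mechanisms available to the process: deterministic transport $\bold{x}\mapsto\bold{x}+t\bold{v}$ between events, and, at an event, the resampling of the velocity component orthogonal to $\nabla\log\pi$ from a full-support Gaussian while the parallel component is flipped. Assumption 1 is precisely the geometric condition guaranteeing that, from any state $(\bold{x}_1,\bold{v})$, one can transport to a point $\bold{x}_1+t\bold{v}$ from which the orthogonal resampling at an event can steer the trajectory toward an arbitrary target $\bold{x}_2$; chaining finitely many such legs, together with the full support of the Gaussian refreshment in velocity, shows that every nonempty open set of $(\bold{x},\bold{v})$-space is reached with positive probability. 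Uniqueness for $\bold{z}'_t$ follows by projection: since the law of $(\bold{x}_t,\bold{v}_t)$ converges to $\mu$ from any start, the law of $\bold{z}'_t=T(\bold{x}_t,\bold{v}_t)$ converges to $T_{\#}\mu$, so $T_{\#}\mu$ is the unique stationary law of the normalized process.

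I expect the controllability/irreducibility step to be the main obstacle. Two points need care. First, the velocity refreshment resamples the component orthogonal to the \emph{gradient} $\nabla\log\pi(\bold{x})$, not orthogonal to the current velocity, so realizing the reachability promised by Assumption 1 generally requires several events and a careful verification that the set of reachable directions at each point has nonempty interior; this is the step most sensitive to the precise meaning of $S^{\perp}$ and to regularity of $\nabla\log\pi$. Second, invoking a uniqueness theorem requires the usual side conditions---non-explosion of the event times, a Feller or $T$-chain property, and aperiodicity---which must be checked for GBPS; since there is no refreshment process, irreducibility rests entirely on the orthogonal resampling, so Assumption 1 cannot be dispensed with. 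A subtlety worth flagging is that the pure direction process $(\bold{x}_t,\bold{v}_t/\Vert\bold{v}_t\Vert)$ is not itself Markov, because the transition at an event depends on the speed $\Vert\bold{v}_t\Vert$ through the magnitude of the parallel component; the argument above sidesteps this by keeping the speed as a hidden coordinate and projecting only after invariance and ergodicity have been established for the full Markov process.
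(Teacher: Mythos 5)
Your invariance half is fine, and it is exactly the paper's Lemma~1: push $\pi(\mathbf{x})d\mathbf{x}\otimes\psi_d(\mathbf{v})d\mathbf{v}$ forward through $T(\mathbf{x},\mathbf{v})=(\mathbf{x},\mathbf{v}/\Vert\mathbf{v}\Vert_2)$ and use rotational invariance of the Gaussian. The genuine gap is in the uniqueness half: the one piece of mathematics that carries Theorem~2 --- and the only place Assumption~1 is actually used --- is the reachability statement (the paper's Lemma~2), namely that from every $(\mathbf{x}_0,\mathbf{v}_0)$ and every nonempty open $W\subset\mathbb{R}^d\times S_{d-1}$ there is a $t>0$ with $P_t\left((\mathbf{x}_0,\mathbf{v}_0),W\right)>0$. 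You correctly identify this as ``the main obstacle,'' but you do not prove it; you only assert that Assumption~1 is the geometric condition guaranteeing it. The paper's entire appendix is devoted to establishing exactly this point, through an explicit multi-case geometric construction (Cases 1, 2.1, 2.2, 3.1, 3.2) that deals with the feature you yourself flag: the resampled component is orthogonal to $\nabla\log\pi(\mathbf{x})$, not to the current velocity, and at a single event only an (open) half-sphere of new directions is attainable because the sign and magnitude of the parallel component are forced; one must therefore chain two or three events, with quantitative control of the sets $S^{\perp}(\cdot,\cdot)$, to reach an arbitrary $W$. A proof that stops where yours stops omits precisely the content of the theorem.

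Beyond that omission, your route is heavier than what the paper uses and raises two difficulties the paper avoids. First, you invoke Harris/Meyn--Tweedie theory for the full process $(\mathbf{x}_t,\mathbf{v}_t)$, which requires verifying non-explosion, a Feller or $T$-chain property, aperiodicity, and irreducibility \emph{including the speed coordinate}; the latter is strictly harder than what is needed, since at each event the velocity component parallel to the gradient is deterministic (only the orthogonal part is Gaussian), so the one-event velocity law is supported on a $(d-1)$-dimensional affine subspace and steering $\Vert\mathbf{v}_t\Vert_2$ into an arbitrary open set needs its own chaining argument, which your sketch does not address. The paper instead works directly on $\mathbb{R}^d\times S_{d-1}$ and needs no convergence theorem at all: it combines Lemma~2 with soft facts from \cite{hairer2010convergence} (distinct ergodic invariant measures are mutually singular, Theorem~7.1, and an invariant measure of a chain with this reachability property has full support, Lemma~2.2) to produce a contradiction with the existence of two invariant measures. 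Second, your closing projection step (``the law of $\mathbf{z}'_t$ converges to $T_{\#}\mu$ from any start, hence $T_{\#}\mu$ is the unique stationary law'') sits uneasily with your own, correct, observation that the normalized pair need not be Markov: if it is not, ``invariant distribution of $\mathbf{z}'_t$'' has no kernel-based meaning, and uniqueness cannot be transferred by convergence of the full process alone; one needs an intertwining of the two semigroups, which is essentially why the paper formulates and proves Lemma~2 for the normalized chain itself rather than for the full chain.
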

\noindent The proof of Theorem 2 and the definitions of notations in Assumption 1 can be found in Appendix.  Whether Theorem 2 is still correct without Assumption 1 is an open question.
\subsection{Construction of Estimator}
\noindent While constructing an unbiased estimator of $I = \int h(\bold{x})\pi(d\bold{x})$, we cannot use the  skeleton of the simulated GBPS path directly. In fact, such an estimator is biased. Suppose $\{\bold{x}_i, \bold{v}_i, T_i\}_{i=0}^M$ be the skeleton of an simulated trajectory, which means that at event time $T_i$, the state is $(\bold{x}_i, \bold{v}_i)$. Then, the whole trajectory $\bold{x}_{[0, T_M]}$ is filled up with
\begin{equation*}
\bold{x}_t = \bold{x}_i + (t - T_i)\bold{v}_i, \quad T_i \leq t < T_{i+1}
\end{equation*}
Let $n$ be the number of data points selected from this trajectory, then an estimator of $I$ is constructed as
\begin{equation*}
\hat{I} = \frac{1}{n}\sum_{i=1}^nh(\bold{x}_{\frac{iT_M}{n}})
\end{equation*}
\subsection{Implementation}
\noindent The main difficult to implement BPS and GBPS is to simulate event time, which follows a Poisson process. The common techniques are based on the thinning and superposition theorems of Poisson process.
\begin{theorem}[Superposition Theorem \cite{kingman1993poisson}]
Let $\Pi_1, \Pi_2, \cdots,$ be a countable collection of independent Poisson processes on state space $\mathbb{R}^+$ and let $\Pi_n$ have rate $\lambda_n(t)$ for each $n$. If $\displaystyle{\sum_{n=1}^{\infty}}\lambda_n(t) < \infty$ for all $t$, then the superposition
\begin{equation*}
\Pi = \bigcup_{n=1}^{\infty}\Pi_n
\end{equation*}
is a Poisson process with rate
\begin{equation*}
\lambda(t) =  \sum_{n=1}^{\infty}\lambda_n(t)
\end{equation*}
\end{theorem}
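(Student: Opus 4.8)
The plan is to characterize each constituent process by its counting statistics and then verify that the superposition's counts carry exactly the Poisson law and the independence structure that define a Poisson process with rate $\lambda(t) = \sum_{n=1}^\infty \lambda_n(t)$. Recall that $\Pi_n$ having rate $\lambda_n(t)$ means that for every bounded measurable $A \subseteq \mathbb{R}^+$ the count $N_n(A) = \#(\Pi_n \cap A)$ is Poisson distributed with mean $\mu_n(A) = \int_A \lambda_n(t)\,dt$, that counts over disjoint sets are independent, and that $\Pi_1, \Pi_2, \ldots$ are mutually independent. The target is to prove these same two properties for $N(A) = \#(\Pi \cap A) = \sum_n N_n(A)$ with mean measure $\mu(A) = \int_A \lambda(t)\,dt$.

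First I would fix a bounded set $A$ and show $N(A) \sim \mathrm{Poisson}(\mu(A))$. The key analytic fact is that an independent countable sum of Poisson variables is again Poisson: if $X_n \sim \mathrm{Poisson}(a_n)$ are independent with $\sum_n a_n < \infty$, then $\sum_n X_n$ converges almost surely and is $\mathrm{Poisson}(\sum_n a_n)$. I would prove this through probability generating functions, using $E[s^{X_n}] = e^{a_n(s-1)}$ for $|s|\le 1$ and independence to get $E[s^{\sum_{n\le m}X_n}] = \exp\big((\sum_{n\le m}a_n)(s-1)\big)$; letting $m\to\infty$ and invoking continuity of the generating function together with the monotone convergence $\sum_{n\le m}a_n \uparrow \sum_n a_n$ identifies the limiting law. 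Applying this with $a_n = \mu_n(A)$, and noting $\sum_n \mu_n(A) = \int_A \sum_n \lambda_n(t)\,dt = \mu(A)$ by Tonelli (all integrands being nonnegative), yields the marginal law. The hypothesis $\sum_n \lambda_n(t) < \infty$, with local integrability of the resulting $\lambda$, guarantees $\mu(A) < \infty$, so that $N(A)$ is almost surely finite and $\Pi$ is a genuine locally finite point process.

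Next I would upgrade this to the full defining property by treating disjoint sets $A_1, \ldots, A_k$ simultaneously through the joint generating function. Writing $N(A_j) = \sum_n N_n(A_j)$, I would compute
\begin{equation*}
\begin{split}
E\Big[\prod_{j=1}^k s_j^{N(A_j)}\Big] &= E\Big[\prod_n \prod_{j=1}^k s_j^{N_n(A_j)}\Big] = \prod_n E\Big[\prod_{j=1}^k s_j^{N_n(A_j)}\Big] \\
&= \prod_n \prod_{j=1}^k e^{\mu_n(A_j)(s_j-1)} = \prod_{j=1}^k e^{\mu(A_j)(s_j-1)},
\end{split}
\end{equation*}
where the second equality uses independence across $n$, the third uses independence of $N_n(A_1),\ldots,N_n(A_k)$ over the disjoint $A_j$ within each process, and the interchange of products is licensed by the absolute convergence established above. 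Since the joint generating function factorizes into a product over $j$ of Poisson generating functions, the variables $N(A_1),\ldots,N(A_k)$ are independent and each is $\mathrm{Poisson}(\mu(A_j))$, which is precisely the characterization of a Poisson process with rate $\lambda(t)$.

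I expect the main obstacle to be the rigorous handling of the countable (rather than finite) superposition: justifying almost-sure convergence of $\sum_n N_n(A)$, the interchange of the infinite products of generating functions, and the passage $m\to\infty$, all of which hinge on the finiteness hypothesis ensuring $\mu(A) < \infty$ on bounded sets. A secondary point worth addressing is that the superposition should be a \emph{simple} point process: since each rate $\lambda_n$ gives a non-atomic mean measure, the probability that any two of the countably many points coincide is zero, so $\Pi$ is almost surely simple and the union is well defined as a point configuration.
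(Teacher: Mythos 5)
The paper itself offers no proof of this statement: Theorem 3 is imported verbatim from Kingman (1993) as a tool for simulating event times via superposition and thinning, so there is no in-paper argument to compare yours against. Judged on its own merits, your proposal is the standard proof --- essentially Kingman's own: characterize each process by its counts, show that a countable sum of independent Poisson variables with summable means is Poisson via generating functions, identify the mean measure by Tonelli, and obtain joint independence over disjoint sets from the factorization of the joint generating function. The outline is correct, and the limiting PGF argument (dominated convergence for $s\in(0,1)$, positivity of the limit forcing a.s.\ finiteness) is the right way to handle the countable sum.

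Two points need tightening. First, the identity $N(A)=\sum_n N_n(A)$, which you use from the very start, is valid only if the processes are almost surely pairwise disjoint; you relegate this to a closing ``secondary point,'' but it is logically prior to every count computation. It does hold, and the clean argument is by conditioning: the points of $\Pi_m$ form a countable set, and since the mean measure $\mu_n$ is non-atomic,
\begin{equation*}
E\bigl[N_n(\Pi_m\cap A)\bigr] = E\bigl[\mu_n(\Pi_m\cap A)\bigr] = 0 ,
\end{equation*}
so $\Pi_m$ and $\Pi_n$ share no points a.s.; state and prove this before summing counts. Second, the hypothesis $\sum_n\lambda_n(t)<\infty$ pointwise does \emph{not} imply local integrability of $\lambda$, which your framework genuinely needs and which you insert tacitly (``with local integrability of the resulting $\lambda$''). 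For instance, with $\lambda_n = n(n+1)\,\mathbf{1}_{(1/(n+1),\,1/n]}$ each $\lambda_n$ integrates to $1$ and the sum is finite at every $t$, yet $\int_0^1\lambda(t)\,dt=\infty$, so the superposition has a.s.\ infinitely many points in $[0,1]$ and is not a locally finite process. This is really a defect of the informal rate-function statement rather than of your argument: Kingman's formulation works with mean measures and allows $N(A)=\infty$ a.s.\ when $\mu(A)=\infty$, which is how the cited version of the theorem survives without the extra hypothesis. Flagging that discrepancy explicitly, rather than absorbing it in a parenthetical, would make the proof complete relative to the statement as given.
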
 
\begin{theorem}[Thinning Theorem \cite{lewis1979simulation}]
Let $\lambda: \mathbb{R}^+\rightarrow\mathbb{R}^+$ and $\Lambda: \mathbb{R}^+\rightarrow\mathbb{R}^+$ be continuous such that $\lambda(t) \leq \Lambda(t)$ for all $t\geq0$. Let $\tau^1, \tau^2, \cdots,$ be the increasing finite or infinite sequence of points of a Poisson process with rate $\Lambda(t)$. For all $i$, delete the point $\tau^i$ with probability $1 - \lambda(t)/\Lambda(t)$. Then the remaining points $\tilde{\tau}^1, \tilde{\tau}^2, \cdots$ form a non-homogeneous Poisson process with rate $\lambda(t)$.
\end{theorem}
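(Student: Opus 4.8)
The plan is to verify the two defining properties of a non-homogeneous Poisson process directly for the thinned process $\{\tilde{\tau}^i\}$: that for any finite collection of disjoint intervals the numbers of retained points are mutually independent, and that the number of retained points in any interval $(a,b]$ is Poisson distributed with mean $\int_a^b \lambda(t)\,dt$. Since these finite-dimensional conditions characterize a Poisson process, establishing them is enough.

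First I would fix an interval $(a,b]$ and abbreviate $\Lambda(a,b] = \int_a^b \Lambda(t)\,dt$ and $\lambda(a,b] = \int_a^b \lambda(t)\,dt$. Conditioning on the event that the original rate-$\Lambda$ process places exactly $n$ points in $(a,b]$, I would invoke the order-statistics (conditional uniformity) property of the Poisson process: given $n$ points, their unordered locations are i.i.d. with density $\Lambda(t)/\Lambda(a,b]$ on $(a,b]$. The probability that one such point is retained is then
\begin{equation*}
q := \int_a^b \frac{\lambda(t)}{\Lambda(t)}\cdot\frac{\Lambda(t)}{\Lambda(a,b]}\,dt = \frac{\lambda(a,b]}{\Lambda(a,b]},
\end{equation*}
and, because the deletion decisions are made independently across points, the number retained given $n$ originals is $\mathrm{Binomial}(n,q)$.

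Next I would un-condition over $n$, using that the original count in $(a,b]$ is $\mathrm{Poisson}(\Lambda(a,b])$. The standard Poissonization identity --- a $\mathrm{Poisson}(\mu)$ number of independent $\mathrm{Bernoulli}(q)$ trials yields a $\mathrm{Poisson}(q\mu)$ number of successes --- then gives that the retained count is $\mathrm{Poisson}(q\,\Lambda(a,b]) = \mathrm{Poisson}(\lambda(a,b])$, as desired. For independence across disjoint intervals I would combine two facts: the counts of the original rate-$\Lambda$ process over disjoint intervals are independent by definition, and the retention marks attached to distinct points are mutually independent; hence the thinned counts over disjoint intervals are independent as well. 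This is most transparent through the marking viewpoint, in which each original point $\tau^i$ carries an independent $\mathrm{Uniform}[0,1]$ mark $U_i$ and is retained iff $U_i \le \lambda(\tau^i)/\Lambda(\tau^i)$; by the marking theorem the pairs $(\tau^i,U_i)$ form a Poisson process on $\mathbb{R}^+\times[0,1]$ with intensity $\Lambda(t)\,dt\,du$, and restricting to the acceptance region $\{u\le\lambda(t)/\Lambda(t)\}$ and projecting onto the time axis gives intensity $\int_0^{\lambda(t)/\Lambda(t)}\Lambda(t)\,du=\lambda(t)$.

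The step I expect to require the most care is the rigorous passage from ``delete each point independently with probability $1-\lambda(t)/\Lambda(t)$'' to a genuine statement about the joint law of the retained configuration --- that is, justifying the conditional uniform structure together with the independence of the marks so that the binomial-thinning argument applies simultaneously over all intervals rather than one interval at a time. The continuity and domination hypothesis $\lambda(t)\le\Lambda(t)$ guarantees that the retention probabilities lie in $[0,1]$ and that $\lambda(a,b]<\infty$, so no integrability pathologies arise; the real substance is entirely in the probabilistic bookkeeping, which the marking-theorem formulation handles most cleanly.
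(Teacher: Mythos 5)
The paper does not prove this statement at all: Theorem 4 is quoted as a known result, with a citation to Lewis and Shedler \cite{lewis1979simulation}, and is used purely as a tool for simulating event times. So there is no in-paper argument to compare against, and your proposal must stand on its own merits --- which it does. Your argument is the standard proof of the thinning theorem and it is correct: the order-statistics (conditional uniformity) property reduces the retained count in a fixed interval $(a,b]$ to binomial thinning of a $\mathrm{Poisson}\bigl(\int_a^b\Lambda(t)\,dt\bigr)$ variable with success probability $q=\int_a^b\lambda(t)\,dt\,/\int_a^b\Lambda(t)\,dt$, and Poissonization gives the $\mathrm{Poisson}\bigl(\int_a^b\lambda(t)\,dt\bigr)$ marginal; the marking--restriction--mapping argument then delivers the full joint structure, including independence over disjoint intervals. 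Two remarks. First, the marking argument alone already proves the entire theorem: once the pairs $(\tau^i,U_i)$ form a Poisson process on $\mathbb{R}^+\times[0,1]$ with intensity $\Lambda(t)\,dt\,du$, restricting to $\{u\le\lambda(t)/\Lambda(t)\}$ and projecting to the time axis (the pushforward intensity $\lambda(t)\,dt$ is non-atomic, so the mapping theorem applies) yields all finite-dimensional distributions at once, making your binomial computation illustrative rather than necessary. Second, the deletion probability in the statement, written $1-\lambda(t)/\Lambda(t)$, should be read as $1-\lambda(\tau^i)/\Lambda(\tau^i)$; your marking formulation implicitly makes this correction, and on the set where $\Lambda(t)=0$ the ratio is undefined but $\lambda(t)=0$ there as well, so no difficulty arises.
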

\noindent In GBPS, from a given state $(\bold{x}, \bold{v})$, the associated Poisson process $\Pi_{\bold{x}, \bold{v}}$ has a rate function $\lambda(t) = \lambda(\bold{x}+t\bold{v}, \bold{v})$. With the help of the above two theorems, we can simulate a sample from $\Pi_{\bold{x}, \bold{v}}$ feasibly. \\
\\
Let $\eta(t) = \int_0^t\lambda(s)ds$, then the first event time, $\tau$, of Poisson process $\Pi$, whose rate function is $\lambda(t)$, satisfies
\begin{equation*}
\mathbb{P}(\tau > u) = \mathbb{P}(\Pi\cap[0, u] = \emptyset) = \exp(-\eta(u))
\end{equation*}
By the inverse theorem, $\tau$ can be simulated with the help of a uniform variate $V\sim\mathcal{U}(0,1)$ via:
\begin{equation*}
\tau = \eta^{-1}(-\log(V))
\end{equation*}
If we can compute $\eta^{-1}$ analytically, it is easy to simulate the event times. Otherwise, the simulations commonly depend on the superposition and thinning theorems. 
\subsection{GBPS with Sub-sampling in Big Data}
\noindent In Bayesian analysis, we suppose the observations $\{y_1, y_2, \cdots, y_N\}$ are i.i.d. samples from some distribution in the family $\{\mathbb{P}_{\bold{x}},\bold{x}\in\mathbb{R}^d\}$ and let $\mathbb{P}_{\bold{x}}$ admit the density $p_{\bold{x}}$ with respect to the Lebesgue measure on $\mathbb{R}^d$. Given a prior $\pi_0(\bold{x})$ over the parameter $\bold{x}$, the posterior is 
\begin{equation*}
\pi(\bold{x}) \myeq \pi(\bold{x}|y_1, \cdots, y_N) \propto \pi_0(\bold{x})\prod_{n=1}^Np_{\bold{x}}(y_n)
\end{equation*}
Traditional MCMC algorithms (with MH step) are difficult to scale for large data set, since each MH step needs to sweep over the whole data set. However, as indicated in \cite{bierkens2016zig}, PDMP may be super-efficient by using sub-sampling to simulate samples from the target distribution if we can give a tight upper bound of the rate function. In GBPS, we only use the gradient of the logarithm of the target distribution, which means we can simulate the posterior by knowing it up to a constant. Besides, we can give an unbiased estimator of the gradient of the logarithm of the posterior by using its sum structure to simulate the posterior exactly:
\begin{equation*}
\widehat{\nabla\log\pi(\bold{x})} =N\nabla\log\pi_I(\bold{x})= \nabla\log\pi_0(\bold{x}) + N\nabla_{\bold{x}}\log p_{\bold{x}}(y_I), \quad I\sim \mathcal{U}\{1,2,\cdots, N\}
\end{equation*}
\\\noindent In Algorithm \ref{alg:subsampling}, we show the workflow of the implementation of subsampling in GBPS. Notice that $\lambda(\Delta, \bold{v}_{i-1})$ equals to $\lambda(\bold{x}, \bold{v}_{i-1})$ in which $\nabla\log\pi(\bold{x})$ is replaced by $\Delta$. $\Lambda(t)$ is an upper bound of $\lambda(\bold{x}, \bold{v})$.
\begin{algorithm}[h]
   \caption{Subsampling version}
   \label{alg:subsampling}
\begin{algorithmic}
\STATE {\bfseries Initialize:} $\bold{x}_0, \bold{v}_0, T_0 = 0$.
\FOR{ $i = 1, 2, 3, \cdots$}
    \STATE Generate $\tau \sim PP(\Lambda(t))$
    \STATE $T_i \leftarrow T_{i-1} + \tau$
    \STATE $\bold{x}_i \leftarrow \bold{x}_{i-1} + \tau\bold{v}_{i-1}$
    \STATE $I\sim \mathcal{U}( \{1, \cdots, N\} )$
    \STATE $\Delta \leftarrow N\nabla\log{\pi_I(\bold{x}_i)} $
    \STATE $q \leftarrow \lambda(\Delta, \bold{v}_{i-1})/\Lambda(\tau)$
    \STATE $u \sim \mathcal{U}(0,1)$
    \IF {$u \leq q$}
        \STATE $\bold{v}_i \leftarrow Q(d\bold{v}|\Delta, \bold{v}_{i-1})$
    \ELSE
        \STATE $\bold{v}_i \leftarrow \bold{v}_{i-1}$
    \ENDIF
    \ENDFOR
   \end{algorithmic}
\end{algorithm}
\section{Numerical simulations}
\noindent In this section, we apply GBPS algorithm on three numerical experiments. Example 1 shows that reducibility problem appears in isotropic Gaussian distribution for BPS without refreshment but is not encountered by GBPS.  In Example 2, we can find that GBPS works well on multimode distributions and with similar performance with BPS. Finally, we present the GBPS with sub-sampling on Bayesian logistic model.\\
\\
\noindent \textbf{Example 1}: (isotropic Gaussian distribution) In this example, we show the reducibility problem of BPS without refreshment. The target distribution is 
\begin{equation*}
\pi(\bold{x}) = \frac{1}{2\pi}\exp\left\{-\frac{x_1^2+x_2^2}{2}\right\}
\end{equation*}
First we apply the BPS without reference Poisson process and show its reducibility in Figure \ref{ReducibleBPS}. 
\begin{figure}[http]
\center
\includegraphics[width=10cm]{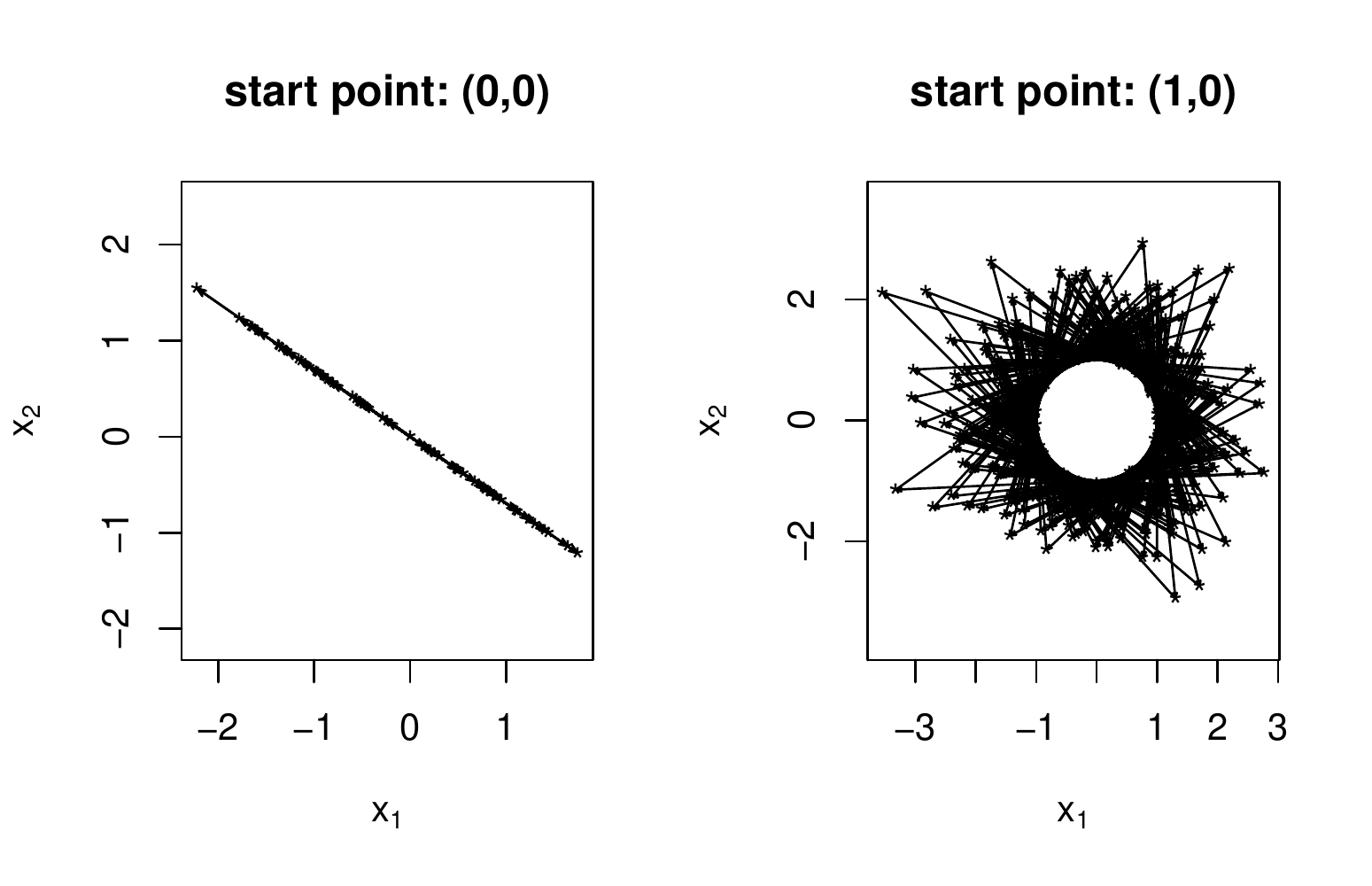}
\caption{Reducibility problem in isotropic Gaussian distributions: (left) the first 50 segments of a BPS path without refreshment which starts from the center of the Gaussian distribution, the trajectory is on a line; (right) the first 500 segments of another BPS path with $\lambda^{\text{ref}}=0$ starting from an point except the center, the trajectory cannot explore the center area.}
\label{ReducibleBPS}
\end{figure}
Compared with BPS without refreshment, GBPS is irreducible, shown in Figure \ref{BGPSIrreducible}.
\begin{figure}[http]
\center
\includegraphics[width=10cm]{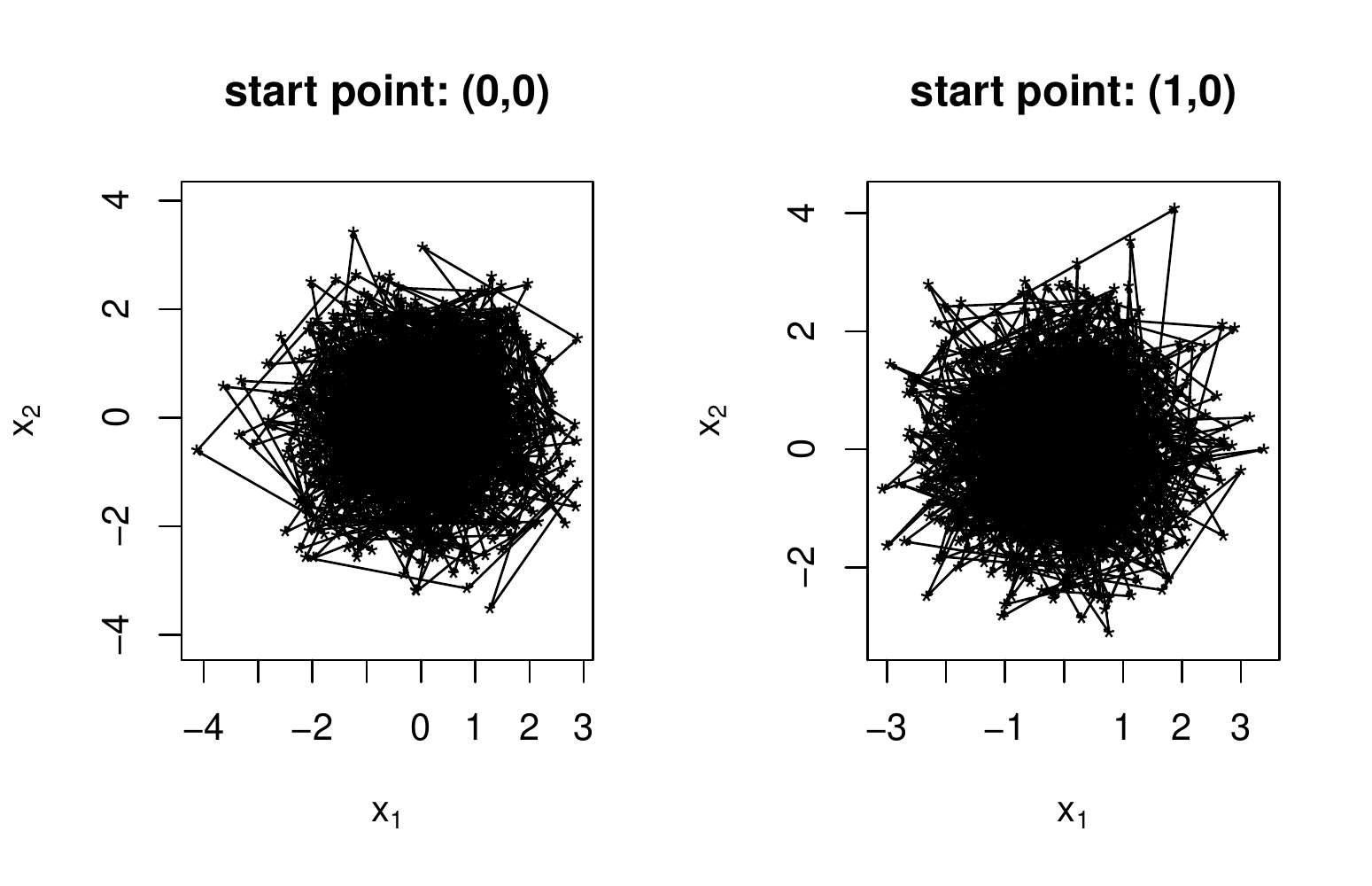}
\caption{GBPS is irreducible in isotropic Gaussian distribution. (left) the first 1000 segments of a GBPS path which starts from the center of the Gaussian distribution; (right) the first 1000 segments of another GBPS path starting from an point except the center}
\label{BGPSIrreducible}
\end{figure}
\\
\\Secondly, we compare the performance of GBPS and BPS with refreshment. For BPS, we set $\lambda^{\text{ref}} = \{0.01, 0.1, 0.2, 0.5, 1\}$. Each method is run 50 times and each sampled path has length $10^4$. For each path, we sample $10^4$ points with length gap $1$. Figure \ref{Error_Isotropic} shows the errors of the first and second moments of each component and Figure \ref{Ess_wasserstein} presents the errors in terms of Wasserstein-2 distance with respect to the target distribution and the effective sample size of each method. \\
\\
For BPS, we need to tune the rate of reference Poisson process to balance the efficiency and accuracy. Event though BPS is ergodic for every positive refreshment rate $\lambda^{\text{ref}}$ in theory, the value of $\lambda^{\text{ref}}$ matters in implementation. The smaller the refreshment rate, the larger the effective sample size (high efficiency), the more slowly the chain mixes. The larger the refreshment rate, the smaller the effective sample size (low efficiency), the faster the chain mixes. However, when the refreshment rate is extremely large or small, BPS will produce chains approximating the target distribution poorly. On the other hand, there is no hyper-parameter to tune in GBPS, which incorporates the randomness of BPS in refreshment into transition dynamics. Compared to BPS with different refreshment rates, GBPS performs modest in terms of the first and second moments of each component. In terms of Wasserstein-2 distance, GBPS outperforms BPS. 
\begin{figure}[http]
\center
\includegraphics[width=10cm]{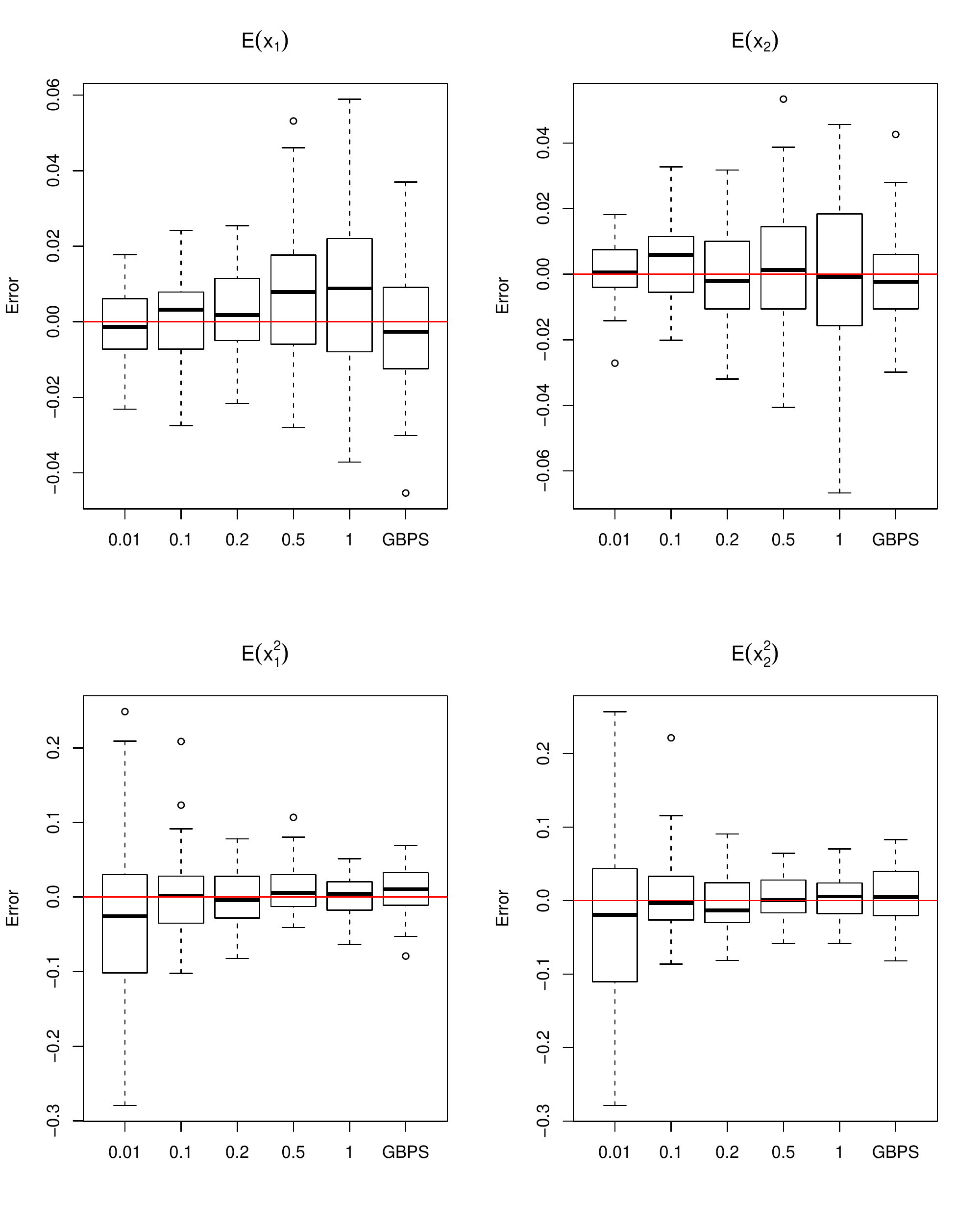}
\caption{Comparison between BPS and GBPS in isotropic Gaussian distribution. For each graph, the first five boxes represent the BPS method with different refreshment rates $\lambda^{\text{ref}} = \{0.01, 0.1, 0.2, 0.5, 1\}$. }
\label{Error_Isotropic}
\end{figure}
\begin{figure}[http]
\center
\includegraphics[width=10cm]{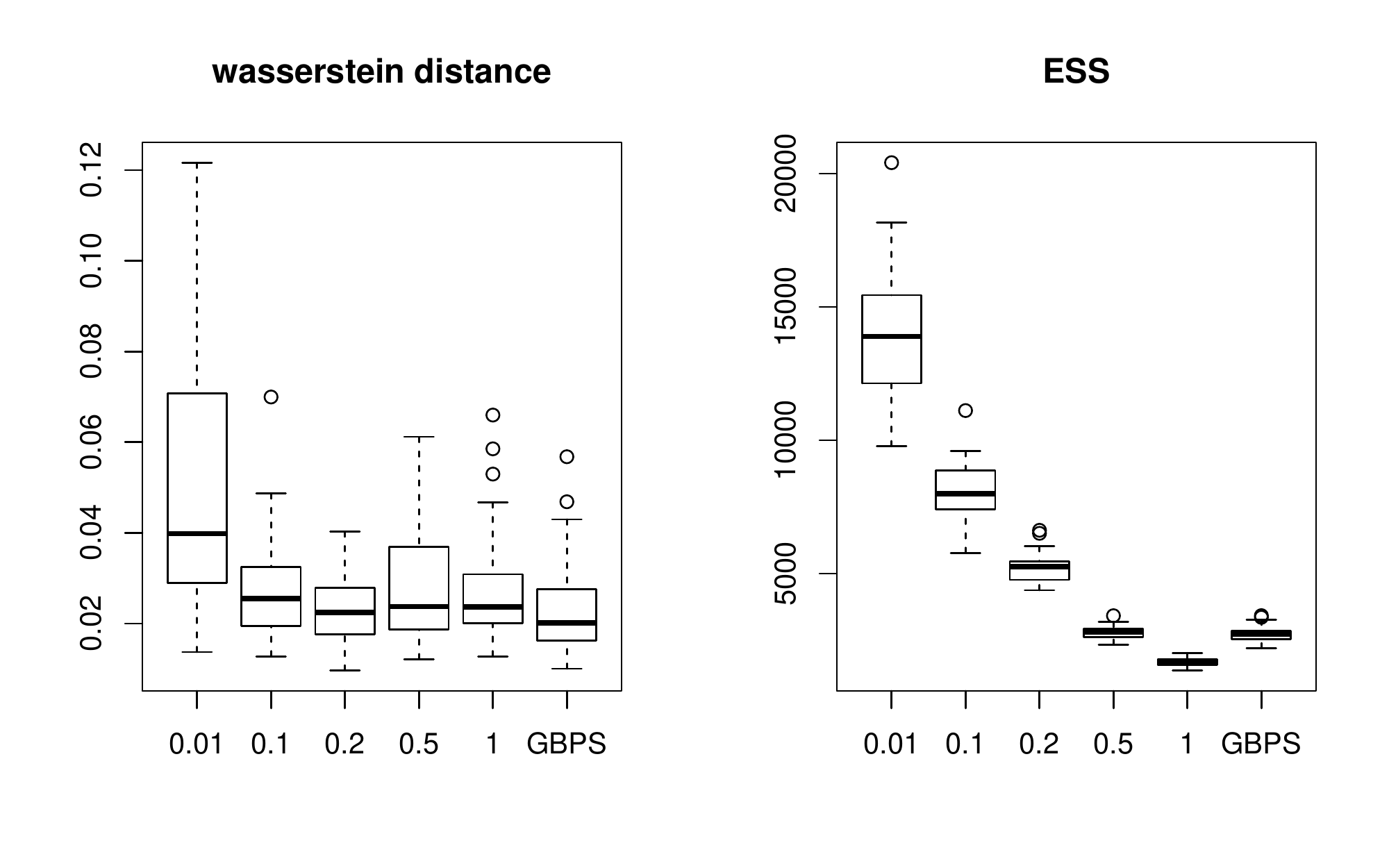}
\caption{Comparison between BPS and GBPS in isotropic Gaussian distribution in terms of Wasserstein distance and effective sample size. For each graph, the first five boxes represent the BPS method with different refreshment rates $\lambda^{\text{ref}} = \{0.01, 0.1, 0.2, 0.5, 1\}$. }
\label{Ess_wasserstein}
\end{figure}
\\
\\
\noindent\textbf{Example 2}: (mixture of Gaussian model) In this example, we show how to simulate the event time by using superposition and thinning theorems. The target is a mixture of Gaussian distributions:
\begin{equation*}
\pi(x_1,x_2) = \frac{p}{2\pi\sigma_1\sigma_2}\exp\left\{-\frac{(x_1-3)^2}{2\sigma_1^2} - \frac{x_2^2}{2\sigma_2^2}\right\} + \frac{1-p}{2\pi\sigma_3\sigma_4}\exp\left\{-\frac{x_1^2}{2\sigma_3^2}-\frac{(x_2-3)^2}{2\sigma_4^2}\right\}
\end{equation*}
In our experiment, we set $p = 0.5,  (\sigma_1, \sigma_2, \sigma_3, \sigma_4) = (1, 1.5, 2, 1)$. The gradient is 
\begin{equation*}
\begin{split}
\frac{\partial \pi(x_1,x_2)}{\partial x_1} &= \frac{p}{2\pi\sigma_1\sigma_2}\exp\left\{-\frac{(x_1-3)^2}{2\sigma_1^2}- \frac{x_2^2}{2\sigma_2^2}\right\}\left(-\frac{(x_1-3)}{\sigma_1^2}\right)\\
&+\frac{1-p}{2\pi\sigma_3\sigma_4}\exp\left\{-\frac{x_1^2}{2\sigma_3^2}-\frac{(x_2-3)^2}{2\sigma_4^2}\right\}\left(-\frac{x_1}{\sigma_3^2}\right)
\end{split}
\end{equation*}
\begin{equation*}
\begin{split}
\frac{\partial \pi(x_1,x_2)}{\partial x_2} &= \frac{p}{2\pi\sigma_1\sigma_2}\exp\left\{-\frac{(x_1-3)^2}{2\sigma_1^2}- \frac{x_2^2}{2\sigma_2^2}\right\}\left(-\frac{x_2}{\sigma_2^2}\right)\\
&+\frac{1-p}{2\pi\sigma_3\sigma_4}\exp\left\{-\frac{x_1^2}{2\sigma_3^2}-\frac{(x_2-3)^2}{2\sigma_4^2}\right\}\left(-\frac{(x_2-3)}{\sigma_4^2}\right)
\end{split}
\end{equation*}
We can give an upper bound for the norm of the gradient of the logarithm of the target density function:
\begin{equation*}
\Vert \nabla \log\pi(x_1, x_2)\Vert_2 \leq \frac{\vert x_1-3\vert}{\sigma_1^2}+\frac{\vert x_1\vert}{\sigma_3^2}+\frac{\vert x_2\vert}{\sigma_2^2} +\frac{\vert x_2-3\vert}{\sigma_4^2}
\end{equation*}
Then an upper bound for $\lambda(\bold{x}, \bold{v})$ is given as 
\begin{equation*}
\lambda(\bold{x}, \bold{v}) \leq \left(\frac{\vert x_1-3\vert}{\sigma_1^2}+\frac{\vert x_1\vert}{\sigma_3^2}+\frac{\vert x_2\vert}{\sigma_2^2} +\frac{\vert x_2-3\vert}{\sigma_4^2}\right)*\Vert \bold{v}\Vert_2
\end{equation*}
By superposition, we need only focus on Poisson process whose rate function has such form: $\lambda(x,v) = \frac{\vert x-\mu\vert}{\sigma^2}$. Let $\lambda_s(x, v) = \lambda(x+sv,v) = \frac{\vert x+sv-\mu\vert}{\sigma^2}$. Define
\begin{equation*}
\eta(t) = \int_0^t\lambda_s(x,v)ds
\end{equation*}
$i)$: If $x>\mu, v >0$, 
\begin{equation*}
\begin{split}
\eta(t) &= \int_0^t\frac{(x-\mu) + sv}{\sigma^2}ds = \frac{\frac{1}{2}vt^2 + (x-\mu)t}{\sigma^2} = \frac{v}{2\sigma^2}\left(t^2 + \frac{2(x-\mu)}{v}t\right) \\
&= \frac{v}{2\sigma^2}\left[\left(t + \frac{(x-\mu)}{v}\right)^2 - \frac{(x-\mu)^2}{v^2}\right]
\end{split}
\end{equation*}
\begin{equation*}
\eta^{-1}(z) = \sqrt{\frac{2\sigma^2z}{v} +\frac{(x-\mu)^2}{v^2} } - \frac{(x-\mu)}{v}
\end{equation*}
$ii):$ If $x<\mu, v < 0$, then
\begin{equation*}
\begin{split}
\eta(t) &= \int_0^t\frac{-(x-\mu) - sv}{\sigma^2}ds = \frac{-\frac{1}{2}vt^2 - (x-\mu)t}{\sigma^2} = -\frac{v}{2\sigma^2}\left(t^2 + \frac{2(x-\mu)}{v}t\right) \\
&= -\frac{v}{2\sigma^2}\left[\left(t + \frac{(x-\mu)}{v}\right)^2 - \frac{(x-\mu)^2}{v^2}\right]
\end{split}
\end{equation*}
\begin{equation*}
\eta^{-1}(z) = \sqrt{-\frac{2\sigma^2z}{v} +\frac{(x-\mu)^2}{v^2} } - \frac{(x-\mu)}{v}
\end{equation*}
$iii):$ If $x > \mu, v \leq0$: 
\begin{equation*}
\eta\left(-\frac{x-\mu}{v}\right) = \int_0^{-\frac{x-\mu}{v}}\frac{sv+(x-\mu)}{\sigma^2}ds = -\frac{(x-\mu)^2}{2v\sigma^2}
\end{equation*}
\begin{enumerate}
  \item If $z > -\frac{(x-\mu)^2}{2v\sigma^2}$:  $t_0 = -\frac{x-\mu}{v}$
           \begin{equation*}
           -\frac{(x-\mu)^2}{2v\sigma^2} +\int_0^t-\frac{sv}{\sigma^2}ds = z, \quad t = \sqrt{-\frac{2\sigma^2z}{v} - \frac{(x-\mu)^2}{v^2}}
           \end{equation*}
           \begin{equation*}
           \eta^{-1}(z) = \sqrt{-\frac{2\sigma^2z}{v} - \frac{(x-\mu)^2}{v^2}} + \left(-\frac{x-\mu}{v}\right)
           \end{equation*}
  \item If $z \leq -\frac{(x-\mu)^2}{2v\sigma^2}$:
          \begin{equation*}
          \int_0^t\frac{sv+(x-\mu)}{\sigma^2}ds = \frac{v}{2\sigma^2}\left(t^2 + \frac{2(x-\mu)}{v}t\right) = z
          \end{equation*}
          \begin{equation*}
          \eta^{-1}(z) = -\sqrt{\frac{2\sigma^2z}{v} + \frac{(x-\mu)^2}{v^2}} + \left(-\frac{x-\mu}{v}\right)
          \end{equation*}
\end{enumerate}
$iv):$ If $x \leq \mu, v >0$: 
\begin{equation*}
\eta\left(-\frac{x-\mu}{v}\right) = \int_0^{-\frac{x-\mu}{v}}\frac{-sv-(x-\mu)}{\sigma^2}ds = \frac{(x-\mu)^2}{2v\sigma^2}
\end{equation*}
 \begin{enumerate}
  \item If $z > \frac{(x-\mu)^2}{2v\sigma^2}$:  $t_0 = -\frac{x-\mu}{v}$
           \begin{equation*}
           \frac{(x-\mu)^2}{2v\sigma^2} +\int_0^t\frac{sv}{\sigma^2}ds = z, \quad t = \sqrt{\frac{2\sigma^2z}{v} - \frac{(x-\mu)^2}{v^2}}
           \end{equation*}
           \begin{equation*}
           \eta^{-1}(z) = \sqrt{\frac{2\sigma^2z}{v} - \frac{(x-\mu)^2}{v^2}} + \left(-\frac{x-\mu}{v}\right)
           \end{equation*}
  \item If $z \leq \frac{(x-\mu)^2}{2v\sigma^2}$:
          \begin{equation*}
          \int_0^t\frac{-sv-(x-\mu)}{\sigma^2}ds = -\frac{v}{2\sigma^2}\left(t^2 + \frac{2(x-\mu)}{v}t\right) = z
          \end{equation*}
          \begin{equation*}
          \eta^{-1}(z) = -\sqrt{-\frac{2\sigma^2z}{v} + \frac{(x-\mu)^2}{v^2}} + \left(-\frac{x-\mu}{v}\right)
          \end{equation*}
\end{enumerate}
In Figure \ref{multimode_samples}, we show the trajectory of the simulated GBPS path and associated samples. Figure \ref{multimode_density} shows the marginal density functions of the target distribution. In Figure \ref{BPSvsGBPS_Multi}, we compare the performance of BPS and GBPS. For BPS, we set $\lambda^{\text{ref}} = 0.01, 0.1, 1$. We sample 50 paths with length $10000$ for each method and take $10000$ points from each path with gap 1 to form samples. Empirically, BPS is ergodic over this example. With the increase of $\lambda^{\text{ref}}$, the refreshment occurs more frequently, which reduces the performance of BPS. Even though GBPS has worse performance, compared to BPS with some refreshment rates, it is quite reliable and has no parameter to tune. 
\begin{figure}[http]
\center
\includegraphics[width=10cm]{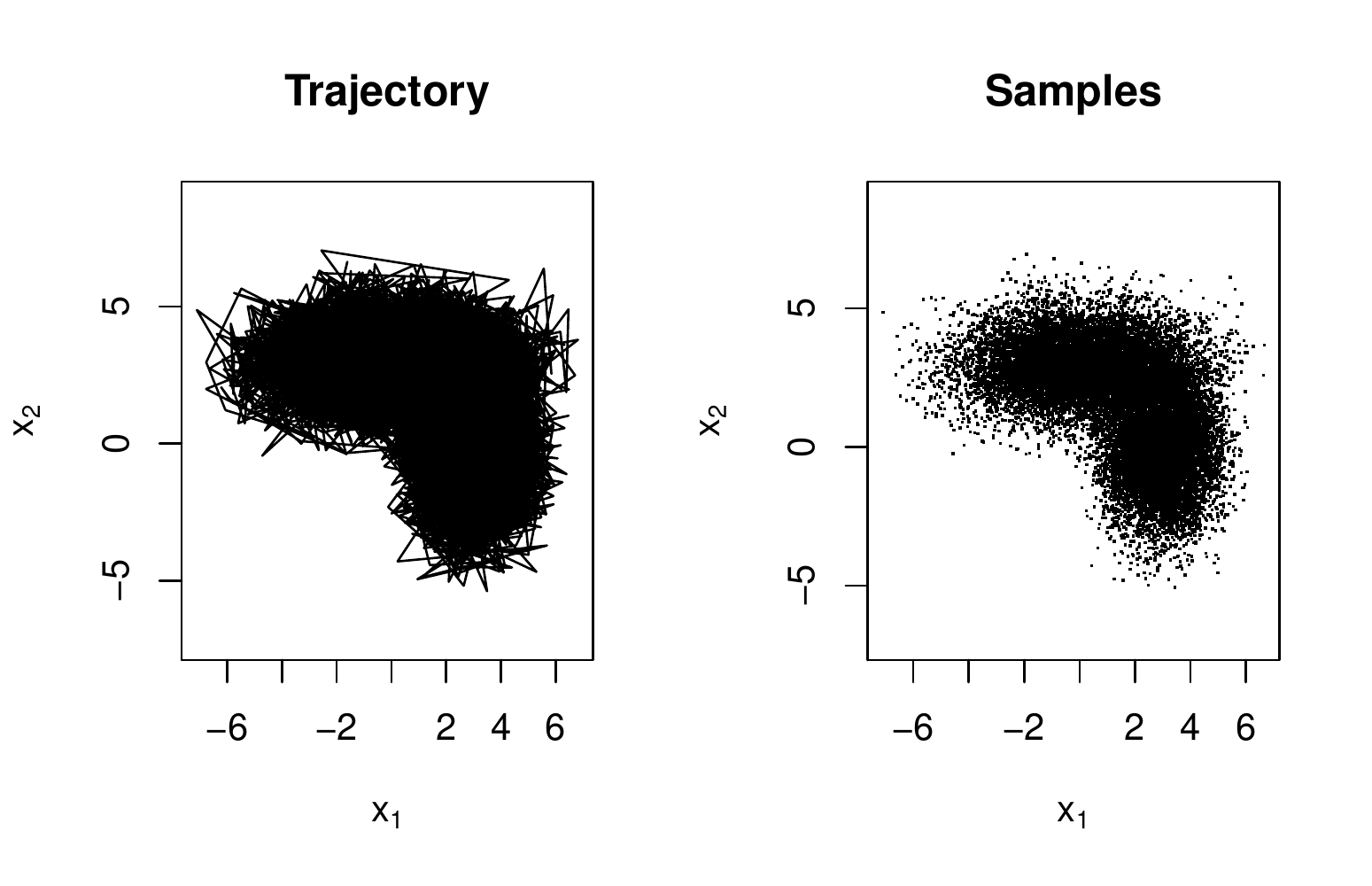}
\caption{The trajectory and samples from a GBPS path.}
\label{multimode_samples}
\end{figure}
\begin{figure}[!http]
\center
\includegraphics[width=10cm]{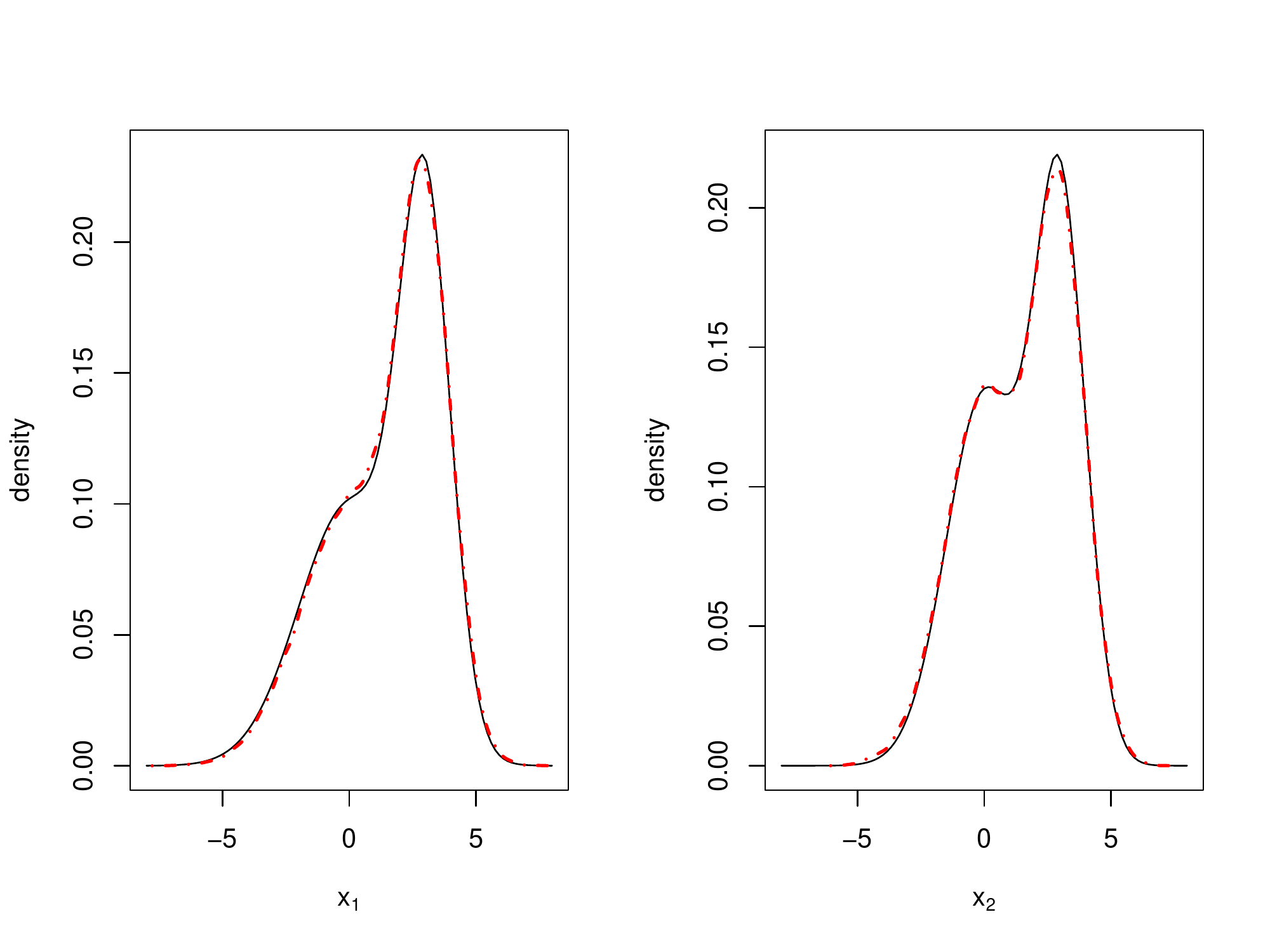}
\caption{The marginal density functions: the black solid lines are true marginal density, the red dotted lines are from a GBPS path.}
\label{multimode_density}
\end{figure}
\begin{figure}[!http]
\center
\includegraphics[width=10cm]{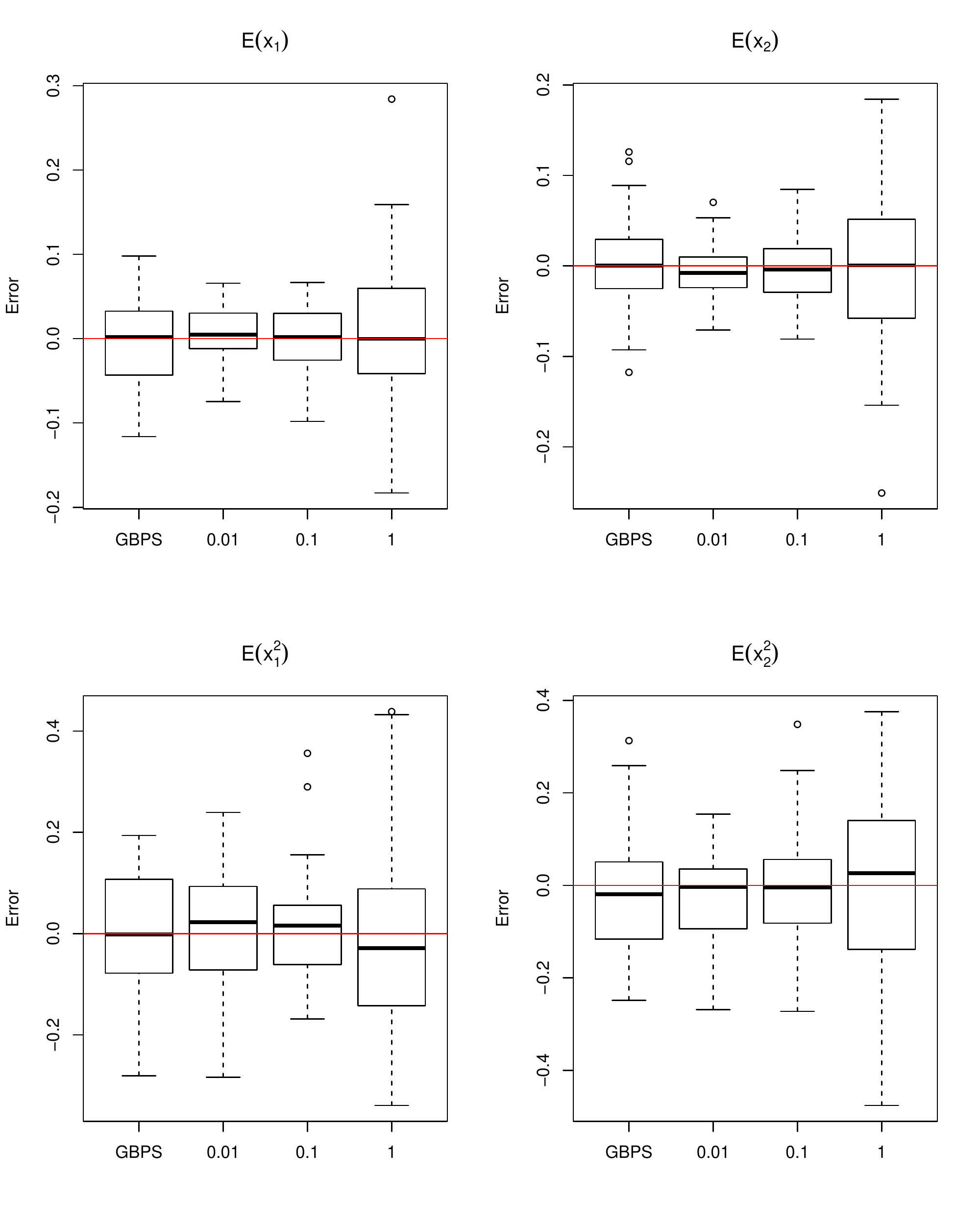}
\caption{Comparison between GBPS and BPS: for each graph, the last three boxes represent BPS with $\lambda^{\text{ref}} = 0.01, 0.1, 1$.}
\label{BPSvsGBPS_Multi}
\end{figure}
\\
\\
\textbf{Example 3}: (Bayesian Logistic Model) For the Bayesian logistic model, we suppose $\bold{x}\in\mathbb{R}^d$ be the parameters and $(y_i, z_i)$, for $i  = 1, 2, \cdots, N$ be the observations,  where $y_i\in\mathbb{R}^d, z_i\in\{0, 1\}$, then
\begin{equation*}
\mathbb{P}(z_i =1 | y_i, \bold{x}) = \frac{1}{1+\exp\{-\sum_{\ell=1}^dy_i^{\ell}x_{\ell}\}}
\end{equation*}
Choosing the improper prior, then the posterior is 
\begin{equation*}
\pi(\bold{x})\propto\prod_{j=1}^N\frac{\exp\{z_j\sum_{\ell=1}^dy_j^{\ell}x_{\ell}\}}{1+\exp\{\sum_{\ell=1}^dy_j^{\ell}x_{\ell}\}}
\end{equation*}
for $k = 1, \cdots, d$, the partial derivative is
\begin{equation*}
\frac{\partial}{\partial x_{k}}\log\pi(\bold{x}) = \sum_{j=1}^N\left[z_j -\frac{\exp\{z_j\sum_{\ell=1}^dy_j^{\ell}x_{\ell}\}}{1+\exp\{\sum_{\ell=1}^dy_j^{\ell}x_{\ell}\}} \right]y_{j}^k
\end{equation*}
Then, they are bounded by
\begin{equation*}
\left| \frac{\partial\log\pi(\bold{x})}{\partial x_k}\right| \leq\sum_{j=1}^N \left| y^k_j\right|
\end{equation*}
and the bounded rate for Poisson process is 
\begin{equation*}
\lambda^+ = \max_{1\leq k \leq d}\sum_{j=1}^N \left| y_j^k\right|
\end{equation*}
In our experiment, we set $d = 5$, $N = 100$ and use $10$ observations for subsampling at each iteration. Figure \ref{LogitDensity} shows the marginal density functions for each component of parameters.
\begin{figure}[!http]
\center
\includegraphics[width=10cm]{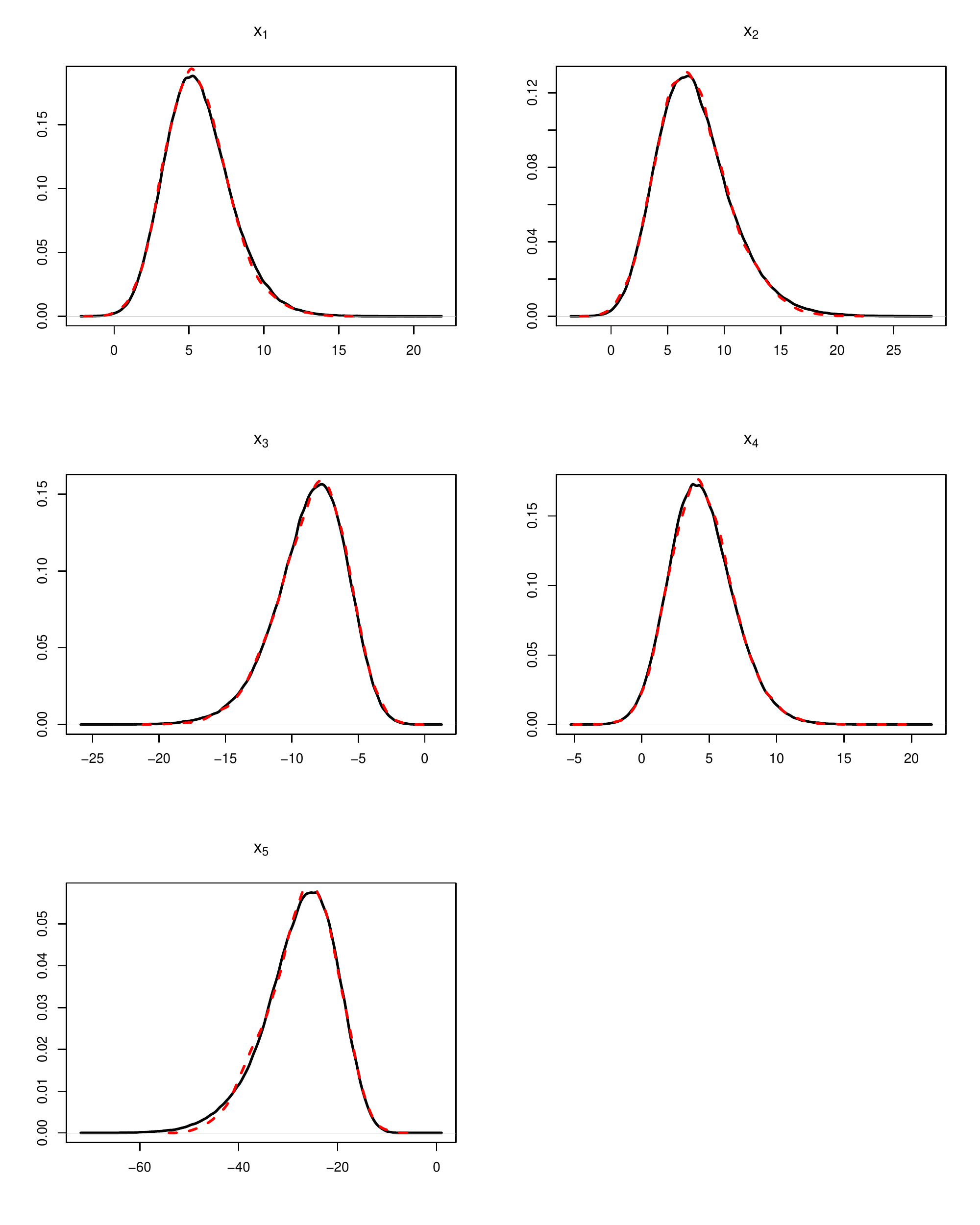}
\caption{The marginal density functions: the black solid lines are marginal density of MH algorithm, which are used as benchmark. The red dotted lines are from a GBPS path.}
\label{LogitDensity}
\end{figure}
\section{Conclusion}
In this article, we generalize the bouncy particle sampler in terms of its transition dynamics. Our method --- Generalized Bouncy Particle Sampler (GBPS) --- can be regarded as a bridge between bouncy particle sampler and zig-zag process sampler. Compared with bouncy particle sampler, GBPS changes the direction velocity according to some distribution at event time. However, compared with zig-zag process sampler, GBPS can be regarded as attaching a moving coordinate system on the state space of $(\bold{x}, \bold{v})$, instead of using a fixed one as zig-zag process sampler. One main advantage of GBPS, compared to BPS, is that it has no parameter to tune.\\
\\
Throughout the whole paper, we suppose that the parameter space has no restrictions. In practice, it is often the case one encounters restricted parameter space problems. In such cases, we may transfer the restricted region into the whole Euclidean space by reparameterization techniques. Besides, \cite{bierkens2017piecewise} shows some methods to simulate over restricted space. Another problem of implementation of these methods is how to simulate event time from Poisson process efficiently. Generally, the simulations are based on superposition and thinning theorems. The upper bound of rate function is crucial. The tighter the upper bound, the more efficient the simulation is. In Bayesian analysis for large data sets, if the upper bound is $\mathcal{O}(N^{\alpha})$, then the effective sample size per likelihood computation is $\mathcal{O}(N^{-(1/2+\alpha)})$. If $\alpha < 1/2$, then both BPS and GBPS will be more efficient than traditional MCMC methods.\\
\\
Exploring several simulation settings, we find that reducibility problem just appears in isotropic Gaussian distribution or in distributions who admit isotropic Gaussian distribution as their component for BPS. However, it is still an open question and needs to prove.
\newpage
\section{Appendix}
\noindent In this appendix, we prove the ergodicity of GBPS. For simplicity, we introduces the following notations. $\pi$ denotes our target distribution, $\text{Vol}$ denotes the Lebesgue measure over Euclidean space. 
\begin{equation*}
\bold{S}(\bold{v}) = \frac{\bold{v}}{\Vert\bold{v}\Vert_2}.
\end{equation*}
\begin{equation*}
S^{\perp}(\bold{x}, \bold{v}) = \{\bold{x}': \langle \bold{v}, \nabla\log\pi(\bold{x})\rangle \times \langle \bold{S}(\bold{x}'-\bold{x}), \nabla\log\pi(\bold{x})\rangle < 0\},
\end{equation*}
\noindent \textbf{Assumption 1}: For any two points $\bold{x}_1, \bold{x}_2 \in \mathbb{R}^d$ and any velocity $\bold{v}\in\mathbb{R}^d, \Vert \bold{v} \Vert_2 = 1$, there exists $t > 0$, such that 
\begin{equation*}
\bold{x}_2\in S^{\perp}(\bold{x}_1 + t\bold{v}, \bold{v})
\end{equation*}
\\
\textbf{Assumption 2}: $\text{Vol}\left(\left\{\bold{x}: \nabla\log\pi(\bold{x}) = \bold{0}\right\} \right)= 0$.\\
\\
\textbf{Remark:} Actually, Assumption 2 can be removed without influence on the correctness of Theorem 2  via a similar proof with that in the following one.\\
\\
\textbf{Lemma 1:} The Markov chain, $\bold{z}'_t = (\bold{x}_t, \frac{\bold{v}_t}{\Vert \bold{v}_t\Vert_2})$ induced by GBPS, admits $\pi(\bold{x})\times \mathcal{U}(S_{d-1})$ as its invariant distribution. \\
\\
\textbf{Lemma 2:} For any $\bold{x}_0\in\mathbb{R}^d, \bold{v}_0\in\mathcal{U}(S_{d-1})$, and any open set $W\subset\mathcal{R}^d\times S_{d-1}$, there exists some positive $t>0$ such that
\begin{equation*}
P_{t}\left((\bold{x}_0, \bold{v}_0), W\right) > 0
\end{equation*}

\begin{proof}[Proof of Lemme 2]
For any open set $W\subset \mathcal{R}^d\times S_{d-1}$, there exist $\bold{x}^*\in\mathcal{R}^d$, $r_1>0$ and $V\subset S_{d-1}$, such that $B(\bold{x}^*, r_1)\times V \subset \mathcal{R}^d\times S_{d-1}$ and $\nabla\log\pi(\bold{x}^*) \neq \bold{0}$. 
For $(\bold{x}_0, \bold{v}_0)$, according to Assumption 1, there exist $t_1\in(0, \infty)$ and a positive constant $\delta_1$,  such that
\begin{equation*}
\langle\bold{v}_0, \nabla\log\pi(\bold{x}_t)\rangle < 0, \text{ for all } t\in[t_1-\delta_1, t_1+\delta_1],
\end{equation*}
\begin{equation*}
\bold{x}^*\in S^{\perp}(\bold{x}_t, \bold{v}_0)
\end{equation*}
By a minor transition of $t_1$, we can suppose that 
\begin{equation*}
\langle\bold{x}^* - (\bold{x}_0 + t_1\bold{v}_0), \nabla\log\pi(\bold{x^*})\rangle \neq 0
\end{equation*}
By selecting $\delta_1$ and $r_1$ small enough, we can get: \\
\\
(i) $\forall \bold{x}', \bold{x}''\in B(\bold{x}^*, r_1), \forall t\in[t_1-\delta_1, t_1+\delta_1]$, 
\begin{equation*}
\langle \bold{x}' - (\bold{x}_0 + t\bold{v}_0), \nabla\log\pi(\bold{x}'')\rangle
\end{equation*}
is always positive or negative;\\
\\
(ii) for each $t\in [t_1-\delta_1, t_1+\delta_1]$,
\begin{equation*}
B(\bold{x}^*, r_1) \subset S^{\perp}(\bold{x}_t, \bold{v}_0)
\end{equation*}
(iii) 
\begin{equation*}
\text{Vol}\left(\displaystyle{\bigcap_{t\in[t_1-\delta_1, t_1+\delta_1]}} \bold{S}\left(B\left(\bold{x}^*, r_1\right) - \bold{x}_t\right)\right) > 0
\end{equation*}
(iv)
\begin{equation*}
\bigcap_{t\in[t_1-\delta_1, t_1+\delta_1]}\bold{S}\left(B\left(\bold{x}^*,r_1\right) - \bold{x}_t\right)\subset \bigcap_{t\in[t_1-\delta_1, t_1+\delta_1]}S^{\perp}(\bold{x}_t, \bold{v}_0)
\end{equation*}
As a result, by (iv), there exists a positive constant, $p_1 = p_1(\bold{x}_0, \bold{v}_0, t_1, \delta_1, \bold{x}^*, r_1)$, such that, $\forall t\in[t_1-\delta_1, t_1+\delta_1]$, 
\begin{equation*}
Q\left(\bigcap_{t'\in[t_1-\delta_1, t_1+\delta_1]}\bold{S}\left(B(\bold{x}^*,r_1) - \bold{x}_{t'}\right)|\bold{x}_t, \bold{v}_0\right) > p_1
\end{equation*}
\textbf{Case 1}: If 
\begin{equation*}
\text{Vol}\left(\displaystyle{\bigcap_{t\in[t_1-\delta_1, t_1+\delta_1]}} \bold{S}\left(B\left(\bold{x}^*, r_1\right) - \bold{x}_t\right) \bigcap V\right) > 0,
\end{equation*}
then there is $t>0$, such that
\begin{equation*}
P_{t}\left((\bold{x}_0, \bold{v}_0), W\right) > 0
\end{equation*}
We illustrate Case 1 in Figure \ref{Case1}.
\begin{figure}[http]
\center
\includegraphics[width=10cm]{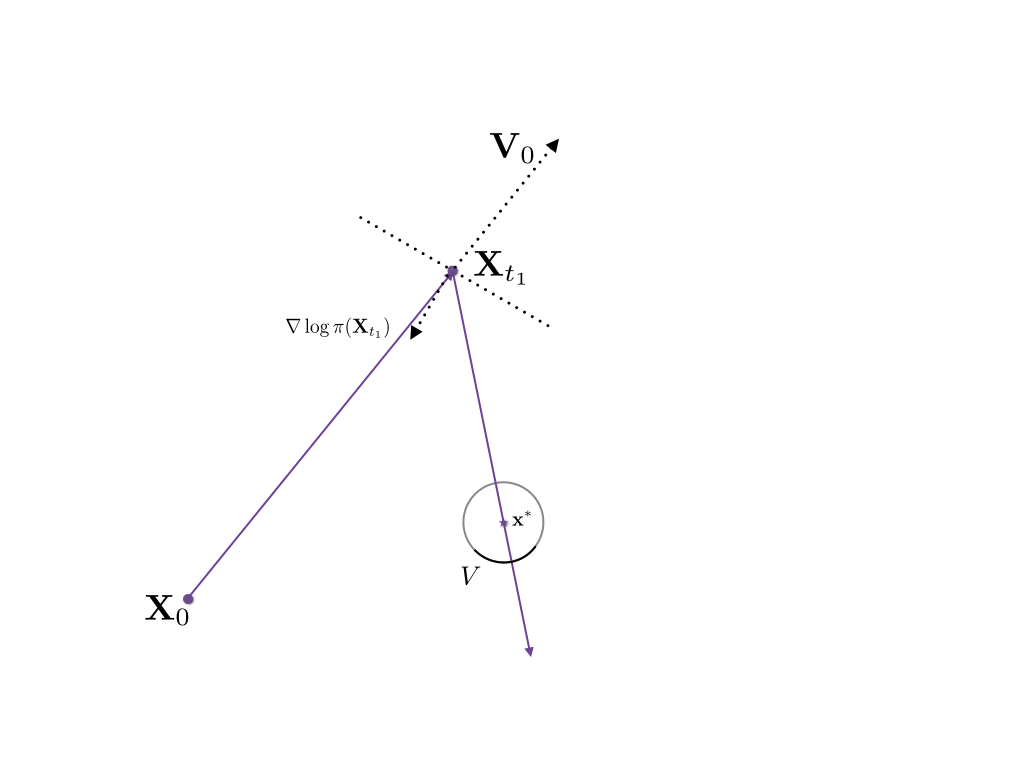}
\caption{Case1}
\label{Case1}
\end{figure}
\\
\textbf{Case 2}: If 
\begin{equation*}
\text{Vol}\left(\displaystyle{\bigcap_{t\in[t_1-\delta_1, t_1+\delta_1]}} \bold{S}\left(B\left(\bold{x}^*, r_1\right) - \bold{x}_t\right) \bigcap V\right) = 0
\end{equation*}
and 
\begin{equation*}
\langle \bold{S}(\bold{x}^* - \bold{x}_{t_1}), \nabla\log\pi(\bold{x}^*) \rangle < 0
\end{equation*}
By (i), we have $\forall \bold{x}', \bold{x}''\in B(\bold{x}^*, r_1), \forall t\in[t_1-\delta_1, t_1+\delta_1]$,
\begin{equation*}
\langle \bold{x}' - (\bold{x}_0 + t\bold{v}_0), \nabla\log\pi(\bold{x}'')\rangle < 0.
\end{equation*}
Denote
\begin{equation*}
\bold{M}_1 \quad=\bigcap_{\substack{\bold{x}\in B(\bold{x}^*,r_1), \bold{v}\in\bold{S}(\bold{x} - \bold{x}_t),\\ t\in[t_1-\delta_1, t_1+\delta_1]}}S^{\perp}(\bold{x}, \bold{v})
\end{equation*}
By decreasing $\delta_1, r_1$, we can have 
\begin{equation*}
\text{Vol}\left(\bold{M}_1\bigcap \bold{V}\right) > 0
\end{equation*}
or
\begin{equation*}
\text{Vol}\left(\bold{M}_1\bigcap \bold{V}\right) = 0,\quad \text{Vol}\left(\bold{M}_1\bigcap \bold{V}'\right) > 0, \text{ where } \bold{V}' = \bold{V}
\end{equation*}
\\
\textbf{Case 2.1}: If
\begin{equation*}
\text{Vol}\left(\bold{M}_1\bigcap \bold{V}\right) > 0
\end{equation*}
By selecting $r_2 < r_1$, such that the length of segment of the line across $\bold{x}_t$ and $\bold{x}'\in B(\bold{x}^*, r_2)$ in the ball $B(\bold{x}^*, r_1)$
is larger than some positive constant $c_1$. Then event occurs on each line across $\bold{x}_t$ and $\bold{x}'$ during the segment in the ball $B(\bold{x}^*, r_1)$ and the changed velocity traverses $\bold{V}$ with positive probability which is larger than some positive constant $p_2$. As a result, there exists some $t>0$ such that 
\begin{equation*}
P_{t}\left((\bold{x}_0, \bold{v}_0), W\right) > 0
\end{equation*}
See Figure \ref{Case2.1} for illustration.\\
\begin{figure}[http]
\center
\includegraphics[width=10cm]{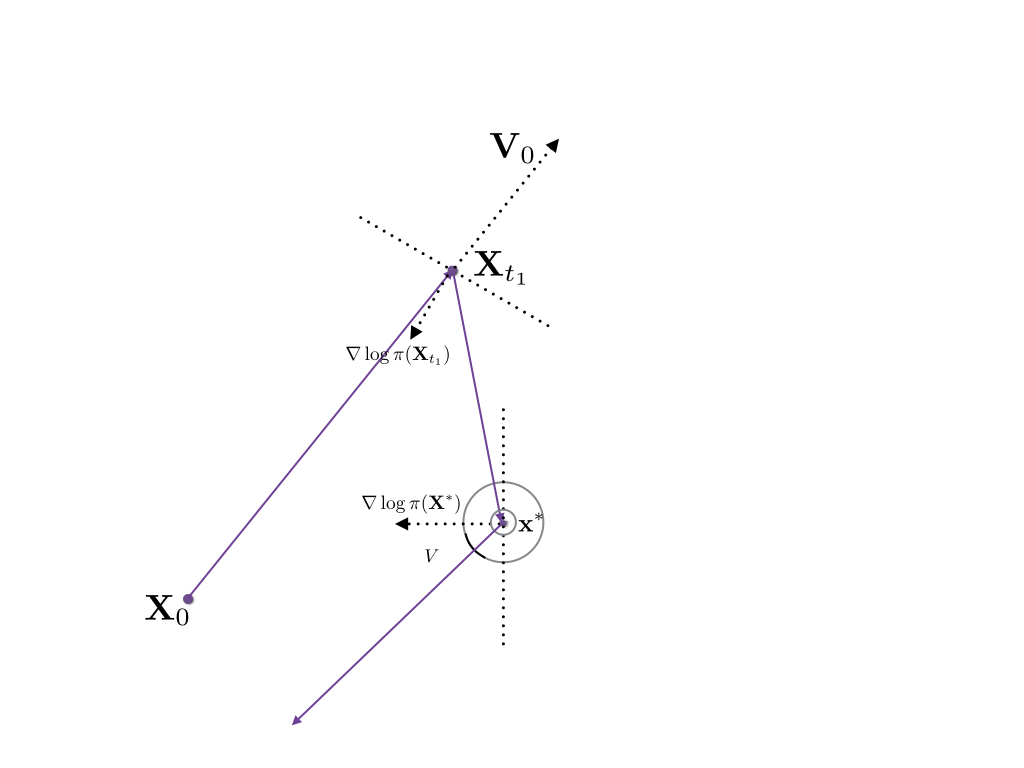}
\caption{Case 2.1}
\label{Case2.1}
\end{figure}
\\
\textbf{Case 2.2}: If 
\begin{equation*}
\text{Vol}\left(\bold{M}_1\bigcap \bold{V}\right) = 0,\quad \text{Vol}\left(\bold{M}_1\bigcap \bold{V}'\right) > 0, \text{ where } \bold{V}' = -\bold{V}
\end{equation*}
With the same treatment as that in Case 2.1, there exists $\bold{x}^{**}$ such that 
\begin{equation*}
\bold{S}(\bold{x}^{**} - \bold{x}^*)\in\bold{V}', \quad \text{and } \langle {S}(\bold{x}^{**} - \bold{x}^*), \nabla\log\pi(\bold{x}^{**})\rangle < 0
\end{equation*}
As a result, there exist two positive constants $r_3 > r_4$, such that, if necessary, decreasing $r_1$ to be small enough, 
\begin{equation*}
\bold{S}(\bold{x}'' - \bold{x}') \in \bold{V}', \text{ where } \bold{x}'\in B(\bold{x}^*, r_1),\quad\bold{x}''\in B(\bold{x}^{**}, r_3)
\end{equation*}
and 
\begin{equation*}
\text{Vol}\left(\bold{M}_1\bigcap\bold{M}_2\bigcap\bold{V}'\right)>0,
\end{equation*}
where 
\begin{equation*}
\bold{M}_2 = \bigcup_{\substack{\bold{x}'\in B(\bold{x}^*, r_1),\\\bold{x}''\in B(\bold{x}^{**}, r_4)}}\bold{S}(\bold{x}'' - \bold{x}').
\end{equation*}
Then, the event, that a particle begins from any point $\bold{x}'\in B(\bold{x}^*, r_1)$ with any velocity $\bold{v}'\in\bold{S}(B(\bold{x}^{**}, r_4) - \bold{x}')$, changes velocity in the region $B(\bold{x}^{**}, r_4)$ to $-\bold{M}_2$ and passes the area $B(\bold{x}^*, r_1)$, occurs with positive probability. As a result, there exists $t>0$ such that
\begin{equation*}
P_{t}\left((\bold{x}_0, \bold{v}_0), W\right) > 0
\end{equation*}
See Figure \ref{Case2.2} for illustration.
\begin{figure}[http]
\center
\includegraphics[width=10cm]{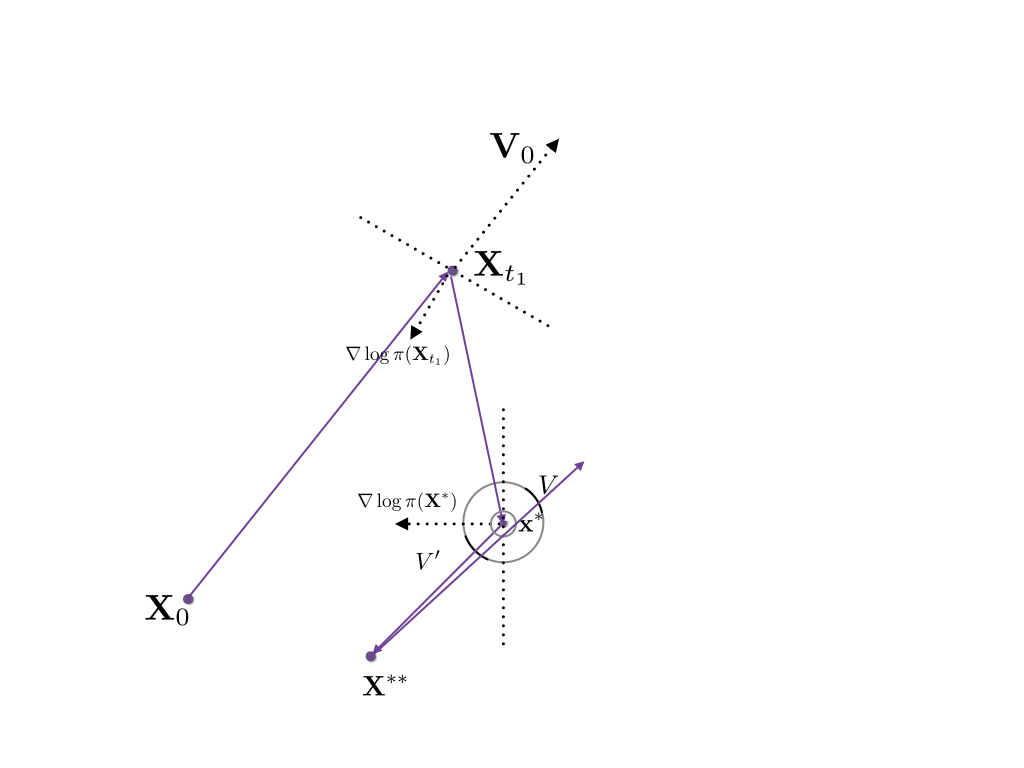}
\caption{Case 2.2}
\label{Case2.2}
\end{figure}
\\
\\\noindent \textbf{Case 3}: If
\begin{equation*}
\text{Vol}\left(\displaystyle{\bigcap_{t\in[t_1-\delta_1, t_1+\delta_1]}} \bold{S}\left(B\left(\bold{x}^*, r_1\right) - \bold{x}_t\right) \bigcap V\right) = 0
\end{equation*}
and 
\begin{equation*}
\langle \bold{S}(\bold{x}^* - \bold{x}_{t_1}), \nabla\log\pi(\bold{x}^*) \rangle > 0
\end{equation*}
By Assumption 1, there exists $\bold{x}^{**}$ on the line which passes $\bold{x}_{t_1}$ and $\bold{x}^*$, such that
\begin{equation*}
\langle \bold{S}(\bold{x}^{*} - \bold{x}_{t_1}), \nabla\log\pi(\bold{x}^{**})\rangle < 0. 
\end{equation*}
Then there exists $r_5$, such that, if necessary, decreasing $\delta_1, r_1$ to be small enough,
\begin{equation*}
\text{Vol}\left(\bold{M}_3\bigcap \bold{V}\right) > 0
\end{equation*}
or
\begin{equation*}
\text{Vol}\left(\bold{M}_3\bigcap \bold{V}\right) = 0\text{,  and  }\text{Vol}\left(\bold{M}_3\bigcap \bold{V}'\right) > 0, \text{ where } \bold{V}' = -\bold{V}
\end{equation*}
where
\begin{equation*}
\bold{M}_3 = \bigcap_{\substack{\bold{x}\in B(\bold{x}^*, r_1),\\ \bold{v}\in \bold{S}(\bold{x} - B(\bold{x}^{**}, r_5)) }}S^{\perp}(\bold{x}, \bold{v})
\end{equation*}
\\
\textbf{Case 3.1}: If 
\begin{equation*}
\text{Vol}\left(\bold{M}_3\bigcap \bold{V}\right) > 0
\end{equation*}
Then, the event that a particle begins from any point $\bold{x}_t, t\in[t_1-\delta_1, t1+\delta_1]$ with velocity $\bold{v}$, passes the region $B(\bold{x}^*, r_1)$, changes velocity in the region $B(\bold{x}^{**}, r_5)$ to $\bold{v}'\in\bold{S}(B(\bold{x}^*, r_1)-B(\bold{x}^{**}, r_5))$, reaches the area $B(\bold{x}^*, r_1)$ and changes velocity to $\bold{V}$ occurs with positive probability. As a result, we obtain the desired result in this case. See Figure \ref{Case3.1} for illustration.
\begin{figure}[http]
\center
\includegraphics[width=10cm]{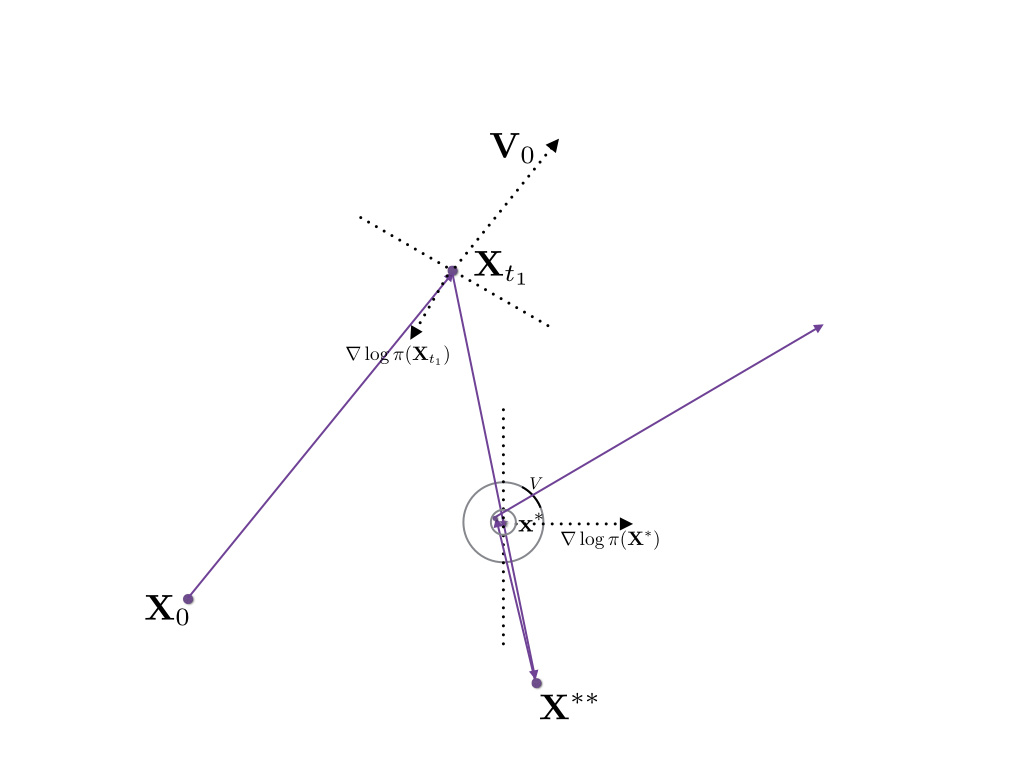}
\caption{Case 3.1}
\label{Case3.1}
\end{figure}
\\
\\\textbf{Case 3.2}: If
\begin{equation*}
\text{Vol}\left(\bold{M}_3\bigcap \bold{V}\right) = 0\text{,  and  }\text{Vol}\left(\bold{M}_3\bigcap \bold{V}'\right) > 0, \text{ where } \bold{V}' = -\bold{V}
\end{equation*}
By the similar treatment of Case 3.1 and Case 2.2, we can get the admired result. We give an illustration of Case 3.2 in Figure \ref{Case3.2}.
\begin{figure}[http]
\center
\includegraphics[width=10cm]{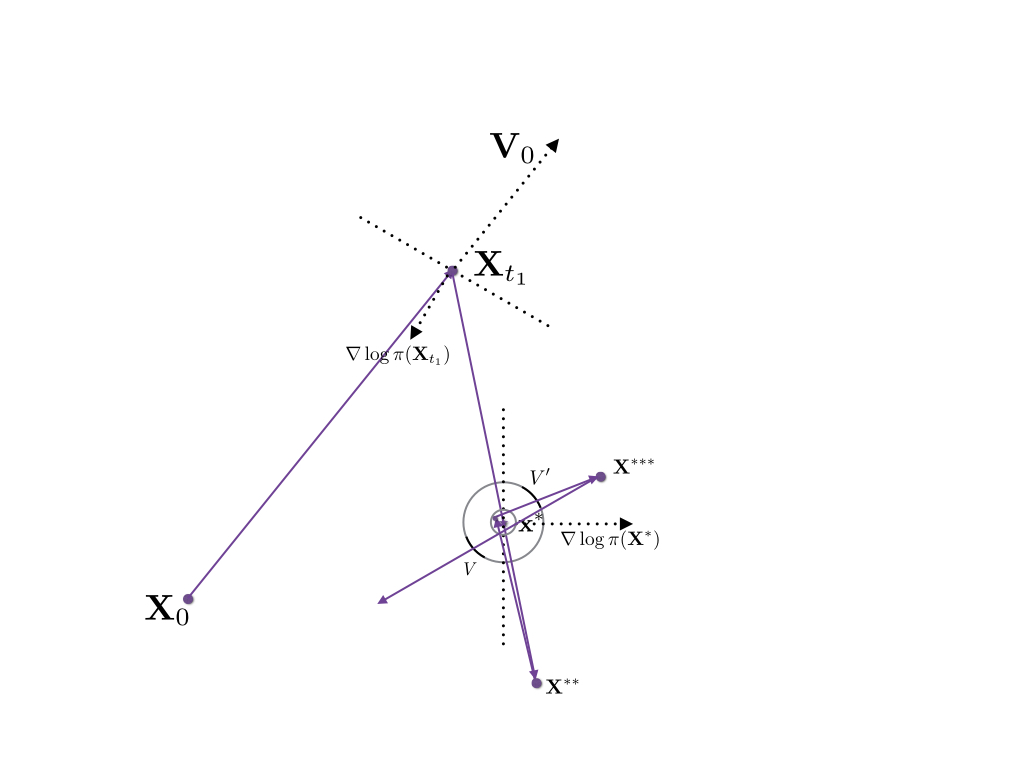}
\caption{Case 3.2}
\label{Case3.2}
\end{figure}
\end{proof}
\noindent With the help of Lemma 1 and Lemma 2, we can prove the ergodicity of GBPS and the way of proof is similar to that of Theorem 1 of \cite{bouchard2017bouncy}
\begin{proof}[Proof of Theorem 2]
Suppose GBPS is not ergodic, then according to Theorem 7.1 of \cite{hairer2010convergence}, there exist two measures $\mu_1$ and $\mu_2$ such that $\mu_1\perp\mu_2$ and $\mu_1$ and $\mu_2$ both are the invariant distribution of GBPS. Thus, there is a measurable set $A\subset \mathbb{R}^d\times S_{d-1}$ such that
\begin{equation*}
\mu_1(A) = \mu_2(A^c) = 0
\end{equation*}
Let $A_1 = A, A_2 = A^c$. By Lemma 2 and Lemma 2.2 of \cite{hairer2010convergence}, the support of $\mu_i$ is $\mathbb{R}^d\times S_{d-1}$. For any open set $B\subset \mathbb{R}^d\times S_{d-1}$, $\mu_i(B) >0$. As a result, at least one of $A_1\cap B$ or $A_2\cap B$ has a positive volume by the measure on $\mathbb{R}^d\times S_{d-1}$ which is induced by Lebesgue measure on $\mathbb{R}^d\times\mathbb{R}^d$. Hence, we denote by $i^*\in\{1, 2\}$ an index satisfying $\text{Vol}(A_{i^*}\cap B) > 0$. From Lemma 2, we know there exists a $z$ in the interior in  $A_{i*}\cap B$ and $t>0$ such that $P_t(z, A_{i^*}) > 0$. As a result, there exist $r_0>0$ and $\delta >0$ such that $P_t(z', A_{i^*}) > \delta$ for all $z'\in B(z, r_0)\subset A_{i*}\cap B$. Hence, 
\begin{equation*}
\begin{split}
\mu_{i^*}(A_{i^*}) &= \int \mu_{i^*}(dz'')P_t(z'', A_{i^*})\\
&\geq\int_B \mu_{i^*}(dz'')P_t(z'', A_{i^*})\\
&\geq \int_{B\cap A_{i^*}} \mu_{i^*}(dz'')\delta\\
&\geq\delta \mu_{i^*}(B\cap A_{i^*}) > 0
\end{split}
\end{equation*}
This contradicts that $\mu_i(A_i) = 0$ for $i\in\{1, 2\}$. As a result, GBPS has as most one invariant measure. By Lemma 1, we complete the proof.
\end{proof}
\newpage

\begin{thebibliography}{9}
\bibitem{bardenetmarkov}
Bardenet R, Doucet A, Holmes C.
\textit{On Markov chain Monte Carlo methods for tall data.}
arXiv preprint arXiv:1505.02827 (2015).

\bibitem{bierkens2015piecewise}
Bierkens J, Roberts G.
\textit{A piecewise deterministic scaling limit of lifted Metropolis--Hastings in the Curie--Weiss model.}
The Annals of Applied Probability 27.2 (2017): 846-882.

\bibitem{bierkens2016non}
Bierkens J.
\textit{Non-reversible Metropolis-Hastings.}
Statistics and Computing 26.6 (2016): 1213-1228.

\bibitem{bierkens2016zig}
Bierkens J, Fearnhead P, Roberts G.
\textit{The Zig-Zag Process and Super-Efficient Sampling for Bayesian Analysis of Big Data.}
arXiv preprint arXiv:1607.03188 (2016).

\bibitem{bierkens2017piecewise}
Bierkens J, Bouchard-C\^{o}t\'{e} A, Doucet A, et al.
\textit{Piecewise Deterministic Markov Processes for Scalable Monte Carlo on Restricted Domains.}
arXiv preprint arXiv:1701.04244 (2017).

\bibitem{bouchard2017bouncy}
Bouchard-C\^{o}t\'{e} A, Vollmer S J, Doucet A. 
\textit{The bouncy particle sampler: A non-reversible rejection-free Markov chain Monte Carlo method.}
Journal of the American Statistical Association (\textit{to appear}) (2017).

\bibitem{chen2013accelerating}
Chen T L, Hwang C R.
\textit{Accelerating reversible Markov chains.}
Statistics \& Probability Letters 83.9 (2013): 1956-1962.

\bibitem{davis1984piecewise}
Davis M H A.
\textit{Piecewise-deterministic Markov processes: A general class of non-diffusion stochastic models.}
Journal of the Royal Statistical Society. Series B (Methodological) (1984): 353-388.

\bibitem{davis1993markov}
Davis M H A.
\textit{Markov Models \& Optimization.}
Vol. 49. CRC Press, 1993.

\bibitem{fearnhead2016piecewise}
Fearnhead P, Bierkens J, Pollock M, et al.
\textit{Piecewise Deterministic Markov Processes for Continuous-Time Monte Carlo.}
arXiv preprint arXiv:1611.07873 (2016).

\bibitem{hairer2010convergence}
Hairer M.
\textit{Convergence of Markov processes.}
Lecture notes (2010). http://www.hairer.org/notes/Convergence.pdf

\bibitem{hastings1970monte}
Hastings W K.
\textit{Monte Carlo sampling methods using Markov chains and their applications.}
Biometrika 57.1 (1970): 97-109.

\bibitem{hwang1993accelerating}
Hwang C R, Hwang-Ma S Y, Sheu S J.
\textit{Accelerating Gaussian diffusions.} 
The Annals of Applied Probability (1993): 897-913.

\bibitem{kingman1993poisson}
Kingman J F C.
\textit{Poisson processes.} 
John Wiley \& Sons, Ltd, 1993.

\bibitem{lewis1979simulation}
Lewis P A, Shedler G S. 
\textit{Simulation of nonhomogeneous Poisson processes by thinning.} 
Naval research logistics quarterly 26.3 (1979): 403-413.

\bibitem{m1953}
Metropolis N, Rosenbluth A W, Rosenbluth M N, et al.
\textit{Equation of state calculations by fast computing machines.} 
The journal of chemical physics 21.6 (1953): 1087-1092.

\bibitem{minsker2014robust}
Minsker S, Srivastava S, Lin L, et al.
\textit{Robust and scalable Bayes via a median of subset posterior measures.}
arXiv preprint arXiv:1403.2660 (2014).

\bibitem{neiswanger2013asymptotically}
Neiswanger W, Wang C, Xing E.
\textit{Asymptotically exact, embarrassingly parallel MCMC.} 
arXiv preprint arXiv:1311.4780 (2013).

\bibitem{peters2012rejection}
Peters E A J F.
\textit{Rejection-free Monte Carlo sampling for general potentials.}
Physical Review E 85.2 (2012): 026703.

\bibitem{quiroz2015speeding}
Quiroz M, Villani M, Kohn R.
\textit{Speeding up MCMC by efficient data subsampling.}
arXiv preprint arXiv:1404.4178 (2014).

\bibitem{scott2016bayes}
Scott S L, Blocker A W, Bonassi F V, et al.
\textit{Bayes and big data: The consensus Monte Carlo algorithm.}
International Journal of Management Science and Engineering Management 11.2 (2016): 78-88.

\bibitem{sun2010improving}
Sun Y, Schmidhuber J, Gomez F J.
\textit{Improving the asymptotic performance of Markov chain Monte-Carlo by inserting vortices.} 
Advances in Neural Information Processing Systems. (2010). 2235-2243.

\bibitem{wang2013parallelizing}
Wang X, Dunson D B.
\textit{Parallelizing MCMC via Weierstrass sampler.}
arXiv preprint arXiv:1312.4605 (2013).
\end{thebibliography}

\end{document}